\journal{Computers \& Security}
\definecolor{Gray}{gray}{0.9}
\newcolumntype{P}[1]{>{\centering\arraybackslash}p{#1}}
\theoremstyle{definition}
\newtheorem{definition}{Definition}
\newtheorem{theorem}{Theorem}
\newcommand{\newHL}{\text}
\begin{document}\sloppy
\begin{frontmatter}
	\title{Mahalanobis distance-based robust approaches against false data injection attacks on dynamic power state estimation}
	
	\author[First]{Jing Lin}
	\ead{jinglin@usf.edu}
	
	\author[First]{Kaiqi Xiong \corref{cor1}}
	\ead{xiongk@usf.edu}
	
	\address[First]{ICNS Lab and Cyber Florida, University of South Florida, Tampa, FL33620, USA}
	\cortext[cor1]{Corresponding author.}

\begin{abstract}
Many researchers have studied false data injection (FDI) attacks in power state estimation, but existing state estimation approaches are still highly vulnerable to FDI attacks. In this paper, we investigate the problem of the above three FDI attacks against dynamic power state estimation (DSE). Although the three attacks were discovered in SSE several years ago, none of them has been well addressed in static power state systems. In this research, we propose two robust defense approaches against the above three efficient FDI attacks on DSE. Compared to existing approaches, our proposed approaches have three major differences and significant strengths: (1) they defend against the three FDI attacks on dynamic power state estimation rather than static power state estimation, (2) they give a robust estimator that can accurately extract a subset of attack-free sensors for power state estimation, and (3) they adopt the little-known Mahalanobis distance in the consistency check of power sensor measurements, which is different from the Euclidean distance used in all the existing studies on power state estimation. We mathematically prove that the Mahalanobis distance is not only useful but also much better than the Euclidean distance in the consistency check of power sensor measurements. Our time complexity analysis shows that the two proposed robust defense approaches are efficient. Moreover, in order to demonstrate the effectiveness of the proposed approaches, we compare them with the three well-known approaches: the least square approach, the Imhotep-SMT approach, and the MEE-UKF approach. Our extensive experiments show that the proposed approaches further reduce the estimation error by two orders of magnitude and four orders of magnitude compared to the Imhotep-SMT approach and the least square approach, respectively. Moreover, our approach is more stable than the MEE-UKF approach.  
\end{abstract}

\begin{keyword}
		Cyber-physical systems \sep Power grids \sep Kalman filters \sep State estimation
\end{keyword}
\end{frontmatter}
	
\section{Introduction}~\label{introduction}
The power grid is a complex electricity delivery system of interconnected networks that deliver electricity from electric power generators to large geographical areas through transmission and distribution lines. Our daily routines rely heavily on the use of electricity. However, a few recent incidents have indicated that the power system is under constant attacks~\cite{cyber}. In December 2015, hackers struck three power distribution centers in Western Ukraine and caused nearly a quarter-million residents to lose their power for several hours~\cite{1}. A year later, the Ukrainian power grid was attacked again and people of the city of Kiev lost power for about an hour~\cite{u3, 101}. Such incidents are not limited to Ukraine. Recently, the Department of Homeland Security and the Federal Bureau of Investigation (FBI) alerted the public that the foreign government has attempted to target critical US infrastructure sectors since at least March 2016 \cite{b5}. Stuxnet malware is the first well-known cyberwarfare weapon that targeted Iran's nuclear power program~\cite{Stuxnet}, \cite{bs}. Two consecutive Ukrainian power grid attacks could be a test run for another cyberwarfare weapon. It is critical to protect our infrastructure from cybercriminals. 

The real-time system monitoring of a power system is essential to ensure the reliable and secured electricity service operation of power grids. The control center of the power system uses the collected sensor measurements to conduct state estimation (SE). SE is then used by the power system operators to perform the contingency analysis for power system security.  

There are two types of SE: traditional static state estimation (SSE) and dynamic state estimation (DSE)~\cite{wu, ding2018survey}. SSE does not consider the relationship among the states at a different time (i.e., states are not varied with time), whereas DSE does. SSE usually uses some sort of the least square approaches such as the weighted least square approach to obtain the best estimate of the static state variables. On the other hand, Kalman filter techniques are widely used for real-time DSE to obtain the optimal estimate of the power grid state~\cite{8, huang2007feasibility, 10}. Both types of SE use power flow models and meter measurements. A power flow model consists of a set of power flow equations that describe the flow of electric power in an interconnected system. The SE based on these nonlinear equations can be computationally expensive. Instead, the linearized power flow model is often used to approximate the power flow model, although it is less accurate. 

Cybercriminals have many ways to attack a power system. One of the most severe ways to attack the power system is through FDI. For instance, an attacker can compromise a few sensors and inject malicious data to mislead SE and intervene in normal power system operations, such as the attacks occurred in Ukraine \cite{1}. In this paper, we investigate FDI attacks against {\it dynamic} power state estimation.

To address this threat, researchers have come up with various defense methods, but they have only focused on {\it static} power state estimation. The Imhotep-SMT approach~\cite{smt} and the least square approach~\cite{3} are two popular ways to solve the SE problem.
The Imhotep-SMT approach uses the Satisfiability Modulo Theories to solve the state estimation problem, whereas
the least square approach estimates the state of a system by
minimizing the mean square error. Furthermore, in~\cite{lin}, a consistency check among sensor readings is introduced for SSE under the FDI attacks. The formal definition of such a consistency check is presented in section II. However, it is shown in \cite{b10} that there are several attack approaches that can inject false data and bypass the consistency check. In \cite{b12}, the Kalman filter approach uses the information from a system model, previous estimation of system states, and sensor measurements to give the optimal SE. Nevertheless, if the sensor measurement is modified by an attacker, the Kalman filter approach provides misleading state estimation since this approach gives the same weight to these malicious sensors as to the normal sensors when performing the state estimation. 

To find a robust method for SE against FDI attacks, we combine the consistency check with the Kalman filter approach to defending against it. First, we present the probabilistic rank-based expanding approach to finding a large set of consistent sensors for a dynamic power system and then apply the Kalman filter approach to this subset of the consistent sensors for SE. It is called the combined consistency and Kalman filter approach (CCKF). To this end, we review the specific attacks that target the consistency check. Then, we introduce our attack-resilient approach and evaluate it by conducting extensive experiments on an IEEE 14-bus system. Furthermore, we perform the runtime analysis to show the scalability of our approach. 

The key contributions of this paper are in the following:
\begin{itemize}
    \item All existing studies on SE use the  Euclidean distance that does not consider the relationship among sensor measurements. Instead, we adopt the Mahalanobis transformation and the Mahalanobis distance to address the correlation between the measurement noise and standardize the variance of each sensor's measurements. Specifically, we mathematically prove that the Mahalanobis distance is a better measure of the error than the Euclidean distance does, so it is used for consistency check.
    
    \item To improve the robustness of SE, we propose two new approaches for selecting a consistent set of sensor measurements for dynamic power state estimation. Our new proposed approaches do not rely on solving $\textbf{Cx}=\textbf{y}$ to find the MMSE estimate for $\textbf{x}$ and so Liu et al.'s method~\cite{b10} for generating false data attack vectors based on the column space of $\textbf{C}$ does not work. That is, our proposed approaches are resilient to the false data attacks discovered in~\cite{b10}. They are also robust statistics since they provide a good performance no matter what distributions sensor measurement data follow when the number of the sensors is large based on the central limit theorem.
    
    \item We propose the prediction-based consistency approach (PCNA) and the combined consistency check with the Kalman filter (CCKF) approach for dynamic power state estimation against FDI attacks. The two proposed approaches, which incorporate a consistency check into the Kalman filter for the state estimation, have similar performance in terms of accuracy, though PCNA is more time-efficient. Furthermore, the experimental results show that the two proposed approaches outperform the two well-known approaches: the Imhotep-SMT approach~\cite{smt} and the least square approach. Specifically, the two proposed approaches reduce the estimation error by two orders of magnitude and four orders of magnitude compared to the Imhotep-SMT approach and the least square approach, respectively.

\end{itemize}

The remainder of this paper is organized as follows. In section~\ref{model}, we give the background information of this research, such as the system model. In addition, section~\ref{model} introduces the concept of $\tau$-consistency and discusses attack models. Section~\ref{method} first studies the Mahalanobis transformation for obtaining uncorrelated and standardize variables, as well as the Mahalanobis distance for measuring distance when the variables are correlated and have the fluctuations of different magnitudes. Then, section~\ref{method} presents PCNA and CCKF approaches for SE. In section~\ref{method}, we further conduct the time complexity analysis of these approaches. Section~\ref{evaluation} illustrates the performance of these approaches against FDI attacks. Section~\ref{related} discusses related work. Finally, we conclude this paper and point out some future research directions in section~\ref{conclusion}.

\section{System Model and Problem Formulations} \label{model}
In this section, we start with necessary mathematical notation, present the system model, define measurement consistency with existing approaches against false data injection, and give three attack models.

\subsection{Notation}
In this paper, we denote a vector by a boldfaced lowercase letter, and a matrix by an boldfaced uppercase letter. The symbol $\mathbf{R}$ denotes a set of real numbers, as well as $k$, $n$, and $p$ are positive integers. The symbol $1_n$ denotes a $n$ by 1 vector whose elements are all 1's, $0_n$ denotes a $n$ by 1 vector whose elements are all 0's, and $\textbf{I}_n$ denotes the $n$ by $n$ identity matrix. $\textbf{C}(j_1,...,j_k)$ is a matrix consisting of the $j_1$-th, $j_2$-th,..., and $j_k$-th rows of matrix $\textbf{C}$. The symbol $\textbf{A}^{-}$ (or $\textbf{A}^{-1}$) represents the generalized inverse (or inverse if $\textbf{A}$ is invertible) of matrix $\textbf{A}$. Note given a matrix $\textbf{A}$, $\textbf{A}^{-}$ is a generalized inverse of $\textbf{A}$ if it satisfies the condition $\textbf{AA}^-\textbf{A}=\textbf{A}$ The transpose of a matrix $\textbf{A}$ is denoted by $\textbf{A}^T$. The symbol $\otimes$ denotes a Kronecker product, a generalization of the outer product. Given a $d \times n$ matrix through 
\[
\textbf{A}=
\begin{bmatrix}
    a_{11}       & a_{12} & a_{13} & \dots & a_{1n} \\
    a_{21}       & a_{22} & a_{23} & \dots & a_{2n} \\
    \hdotsfor{5} \\
    a_{d1}       & a_{d2} & a_{d3} & \dots & a_{dn}
\end{bmatrix}
\]
 and a $k \times p$ matrix $\textbf{B}$, then the Kronecker product:
\[
\textbf{A} \otimes \textbf{B}
=
\begin{bmatrix}
    a_{11} \textbf{B}      & a_{12}\textbf{B} & a_{13}\textbf{B} & \dots & a_{1n}\textbf{B} \\
    a_{21} \textbf{B}      & a_{22}\textbf{B} & a_{23}\textbf{B} & \dots & a_{2n}\textbf{B} \\
    \hdotsfor{5} \\
    a_{d1}\textbf{B}       & a_{d2}\textbf{B} & a_{d3}\textbf{B} & \dots & a_{dn}\textbf{B}
\end{bmatrix}
\]
is a $dk \times np$ matrix. The $L_2$-norm of a $p \times 1$ vector $x$ is defined by:
 $$||  \textbf{x}||=\sqrt{\sum_{i=1}^p|x_i|^2},$$ 
where $|x_i|$ is the absolute value of $x_i$. 
\subsection{System Model}

Assume that there are $p$ power state variables to be estimated in a power system, each variable representing either the voltage magnitude $V_i$ or phase angle $\delta_i$ of a bus $i$, and let us denote the $p$ power state variables at time $k$ by a $p\times 1$ column vector, $x(k)$. Since we cannot directly measure the values of the power states, we have to rely on power flow meters to measure the readings of sensors installed in the power system. We further assume that there are $n$ power flow meters to provide sensor measurements at time $k$ denoted by an $n\times 1$ column vector, $y(k)$, each vector element corresponding to one sensor measurement. Thus, the dynamic state estimation problem is to estimate the state vector $\mathbf{x}(k)$ at time $k$ based on a sensor measurement vector  $\mathbf{y}(k)$, where $\mathbf{x}(k) \in \mathbf{R}^p$ is a vector containing $p$ state variables and
$\mathbf{y}(k) \in \mathbf{R}^n$ is a vector consisting of all meter measurements (i.e., active power injections, reactive power injections,
active power flows, and reactive power flows). The state estimation problem is considered dynamic since its power state variables and sensor measurements are varied with time $k$. Furthermore,
We assume that the system has the linear time invariant dynamics. That is, the state vector $\mathbf{x}(k+1)$ at time $k+1$ is related to the state vector $\mathbf{x}(k)$ at time $k$ as follows. 
$$\mathbf{x}(k+1)=\mathbf{Ax}(k)+\mathbf{w}(k),$$
where $\mathbf{A}=(a_{ij})$ is a $p \times p$ system matrix relating state vector at time $k$ to the state vector at time $k+1$, and $ \mathbf{w}(k)\sim \mathbf{N}(\mathbf{0}, \mathbf{\sigma_w^2I_p})$ is a process noise vector at the time $k$. 

The relationship between the state variables (voltage magnitude $V_i$ or angle $\delta_i$) and sensor measurements (real power injections $P_i$ and reactive power injections $Q_i$ at bus $i$) can be described by the following power flow equations: 
\begin{align*}
    P_i=V_i \sum_{j=1}^b V_j(G_{ij}\cos{(\delta_{i}-\delta_{j})}+B_{ij}\sin{(\delta_{i}-\delta_{j})})\\
     Q_i=V_i \sum_{j=1}^b V_j(G_{ij}\sin{(\delta_{i}-\delta_{j})}-B_{ij}\cos{(\delta_{i}-\delta_{j})})
\end{align*}  
where $i=1,2,..., b$ and $b$ is the total number of buses. Furthermore, $G_{ij}$ and $B_{ij}$ are the real and imaginary parts of the ($i, j$)-th element in the bus admittance matrix.

Moreover, we assume that the state vector is associated with the sensor measurement vector through the following relationship:
$$\mathbf{y}(k)=\mathbf{Cx}(k)+\mathbf{v}(k),$$
where $\mathbf{C}$ is an $n \times p$ matrix relating state vector to the measurement vector and is independent of time, and $ \mathbf{v}(k)\sim \mathbf{N}(\mathbf{0}, \mathbf{\sigma_v^2I_n})$ is the measurement noise vector at time $k$. The matrix $\mathbf{C}$ is determined by the topology of the power system. Furthermore, we assume that the attacker knows the matrix $\mathbf{C}$, and the observed measurement vector at time $k$ is
$$\mathbf{y}_o(k)=\mathbf{y}(k)+ \mathbf{\phi}(k),$$ 
where $\mathbf{\phi}(k) \in \mathbf{R}^n$ is the attack vector at time $k$. Assume that there are $m$ malicious sensor measurements and let $I=\{i_1, i_2, ..., i_m\}$ be the index set of the $m$ malicious sensor measurements. Then, the $i$th element of the attack vector $\mathbf{\phi}(k)$ is zero for all $i \notin I$.

\subsection{Measurement Consistency}\label{consistency}

In this section, we review the notion of a consistency check among sensor measurements used in \cite{b9}. The normal sensor measurements usually provide a good estimate of the true state variables, whereas the malicious sensor measurements are intended to provide a biased estimation of state variables, where such an estimation is quite different from the true value. Therefore, the consistency among the sensor measurements can be defined as follows. Note we assume $d>p$ in the definition since the state estimation is not unique otherwise.

\begin{definition}\label{d1}
Assume $d>p.$ Sensor measurements $y_{j_1}, ..., y_{j_d}$ are considered $\tau$-consistent if $$\min_x||\mathbf{C}(j_1,...,j_d)\mathbf{x}-(y_{j_1},...,y_{j_d})^T|| <\tau,$$ 
where $||*||$ is a $L_2$-norm and $\textbf{C}(j_1,...,j_d)$ is a matrix consisting of the $j_1$-th, $j_2$-th,..., and $j_d$-th rows of matrix $\mathbf{C}$. $\tau$ is the critical chi-square test statistics that satisfies
$$P(\textit{\textit{L }}<\tau)=\alpha,$$
where $L=||\mathbf{C}(j_1,...,j_d)\mathbf{x}-(y_{j_1},...,y_{j_d})^T ||$ follows a chi-square distribution with $d-p$ degree of freedom and $\alpha$ is significance level of the test.
\end{definition}

Several algorithms based on this consistency check have been introduced to detect malicious measurements \cite{b9}. A naive algorithm is to simply use the brute force approach to conduct the consistency check as follows.  

\subsubsection{Brute force approach} \label{brute}

The brute force approach examines all combinations of measurements in order to find the largest subset that is $\tau$-consistent. Starting with the set of all measurements $\textbf{y}=(y_1, y_2, ..., y_n)$, we estimate $\textbf{x}$ using the least square approach and check the $\tau$-consistency based on Definition 1. If it is $\tau$-consistent, we are done. Otherwise, we would test all subsets of $\textbf{y}$ with one fewer measurement. We would continue to do so until a $\tau$-consistent subset is obtained. (Note: the largest subset found by this brute force approach need not be unique since there could be multiple subsets of the same size that are $\tau$-consistent.)  

However, this approach is time-consuming in general. Therefore, Xiong and Ning \cite{b9} introduced a more computationally efficient probabilistic rank-based expanding approach. 

\subsubsection{Probabilistic Rank-Based Expanding Approach}\label{prob}

The brute force approach is inefficient when the number of sensors $m$ is large. A probabilistic rank-based expanding approach was initially proposed by Xiong and Ning in~\cite{b9}. Its main key is to use a probabilistic rank-based approach to find a collection of benign sensor measurements as a seed and then expand the seed to contain the rest of benign sensor measurements as many as possible for estimating a power state. The probabilistic rank-based expanding approach is less time-consuming, although there is no guarantee the obtained subset is the largest among all consistent subsets. It consists of the seeding phase and the expanding phase.

The seeding phase is to find a collection of benign sensor measurements as a seed. Since the majority of the sensors are benign, there is no need to examine all the subsets of size $p$ to find a good seed. Suppose there are at least

\begin{equation}
    \delta=n-\left\lfloor \frac{n-p-1}{2} \right\rfloor=p+\left\lfloor \frac{n-p+1}{2} \right\rfloor
\end{equation}
benign sensors. In this case, the probability that a randomly chosen subset of $p$ sensors is benign is
\begin{equation}
    P_{\delta}=\frac{\binom{\delta}{p}}{\binom{n}{p}}.
\end{equation}

Let $P_h$ be the probability that at least one subsets out of $h$ ($h>0$) randomly chosen subsets of $n$ measurements is benign. Then, $$P_h=1-(1-P_{\delta})^h.$$ However, since not all the subsets can be used to obtain an estimated state vector, $P_h$ is an approximation of the probability of finding a seed from $h$ randomly chosen subsets of $p$ measurements. 
 
To find a seed, we first calculate the number $h$ of subsets of size $p$ to ensure that $P_h$ is large enough, saying $0.995$. This can be done by solving the equation $$P_h=1-(1-P_{\delta})^h$$ and obtain 
 $$h=\left\lceil\frac{ \log{(1-P_h)} } { \log{ \left(1- \frac{ \binom{\delta}{p} }{\binom {n}{p}}  \right)}} \right\rceil$$
 Then, randomly choose $h$ subsets of size $p$ from $n$ measurements. For each chosen subset, solve for $x=(x_1, x_2,..., x_p)$ using the least square method. For each obtained solution $\hat{x}_j$, compute the absolute residuals: 
 $$r_{i, \hat{x}_j} =|\mathbf{C(i)x}-\mathbf{y}|,$$ 
 for all measurements $y_i$ for $i=1, 2, ..., n$. Since residual $r_{i, \hat{x}_j} $ reflects the consistency between $\hat{x}_j$ and $y_i$, the sorting score for each subset can be defined as the sum over the smallest $\delta$ squared residuals; i.e., $$ S_j=\sum_{i=1}^{\delta} (r_{i, \hat{x}_j})^2.$$ Intuitively, smaller $S_j$ indicates smaller overall residual and better estimate. Thus, we keep $n_{best}$,  where $0<n_{best}<h$, subsets that corresponds to the $n_{best}$ smallest sorting score used for the expanding phase. 
 
In the expanding phase, the objective is to expand the seed obtained in the seeding phase in order to find the largest consistent subset. For each seed obtained from the seeding phase, we sort all the measurements that are not in the seed $S$ according to their absolute residual. Start with the measurement with the smallest residual, check if it is $\tau$-consistent with the seed $S$. If yes, it is added to $S$. If not, it is discarded. Repeat this process until all measurements are checked. In section~\ref{method}, we extend the probabilistic rank-based approach to solve the SE problem when dynamic power systems are under FDI attacks. 

The consistency check is used to detect malicious sensor measurements in \cite{lin, smt, b9}. However, Liu et al. \cite{b10} proposed several FDI attacks that can bypass this $\tau$-consistency check. These FDI attacks are based on the idea of generating an attack vector $\mathbf{\phi}(k)$ such that $\mathbf{\phi}(k)$ is a linear combination of the column vectors of $\mathbf{C}$. That is, $\mathbf{\phi }(k)=\mathbf{Ce}(k) $, for some nonzero vector $\mathbf{e}(k)$. If $\mathbf{y}(k)$ at time $k$ can bypass the $\tau$-consistency check, then, as shown below, the malicious sensor measurement $\mathbf{y}_o(k)=\mathbf{y}(k)+\mathbf{\phi}(k)$ can also bypass the $\tau$-consistency check if $\mathbf{\phi }(k)=\mathbf{Ce}(k)$. 

Let $\mathbf{e}(k)=\hat{\mathbf{x}}_{o}(k)-\hat{\mathbf{x}}(k)$, where $\hat{\mathbf{x}}_{o}(k)$ is the estimated state vector obtained from $\mathbf{y}_o(k)$ at time $k$ and $\hat{\mathbf{x}}(k)$ is the estimated state vector obtained from $\mathbf{y}(k)$. Then, 
\begin{align*}
    || \mathbf{y}_o(k)-\mathbf{C}\hat{x}_{o}(k)|| &=||\mathbf{y}(k)+\mathbf{Ce}(k) -\mathbf{C}\hat{x}_{o}(k)||\\
    &=||\mathbf{y}(k)+ \mathbf{Ce}(k) -\mathbf{C}(\mathbf{\hat{x}}(k)+\mathbf{e}(k))||\\
    &=||\mathbf{y}(k)-\mathbf{C}\mathbf{\hat{x}}(k)||<\tau
\end{align*}
Therefore, if the attack vector $\mathbf{\phi}(k)$ is a linear combination of the column vectors of \textbf{C}, then the injected false data can bypass the detection. In the next subsection, we will consider this type of FDI attack, as well as a random attack. 

\subsection{The Attack Models}

Three attack models are considered in this paper. They are (1) the random attack considered by most researchers in the existing literature, (2) the specific sensor attack for a limited number of malicious sensors, and (3) the targeted attack that focuses on changing the state estimation for a particular set of state variables, i.e., specific voltage magnitude and/or angle.

\subsubsection{Attack Model 1}

The simplest one is a random attack model. This attack model assumes that an attacker can randomly select a subset of sensors and inject false data into them. To generate a random attack, an attacker randomly selects $m$ sensors and adds random errors to these sensor measurements. Formally, we let $I_m$ be a set of indices of $m$ sensors randomly selected from $n$ sensor readings. Then, $i$th component of the attack vector $\mathbf{\phi}(k)$ at time $k$ is 
\[\phi_i(k) =\begin{cases} 
      M*r & i\in I_m \\
      0 & i\notin I_m
   \end{cases},
\]
where $r \in N(0,1)$ and $M \in R$, and  
the malicious  sensor measurements $\mathbf{y}_o(k)$ at time $k$ is equal to $$\mathbf{\phi}(k)+\mathbf{y}(k),$$ 
where $\mathbf{y}(k)$ is an actual sensor measurement at time $k$. 

This attack model is the simplest in the sense that no much calculation is needed to determine the attack vector. The drawbacks are: (1) it is not targeted to specific state variables, and (2) it is assumed that the attacker can freely change measurements of the chosen sensors to any value at any time. 

\subsubsection{Attack Models 2 and 3}

The other two attack models are more intricate. They are called consistency attack models, which are based on the idea introduced in subsection \ref{consistency}. They are more complex than the random attack model and focus on injecting the false data that bypasses the consistency check. These attack models are developed based on Liu et al.'s work \cite{b10}. 

{\it a) Attack Model 2:} The specific sensor attack model assumes that the limited number of sensors are malicious and the attacker can change them at will. Suppose the attacker has access to $m$ sensors and let $I_m$
be an index set of these $m$ sensors. Let ${\phi}'$ be the reduced attack vector that is corresponding to $m$ malicious sensors. It is shown in \cite{b10} that if $${\phi}'=(I_{n}-B'^{-}B')d,$$ where the matrix $B'$ consists of the column vectors of $C(C^TC)^{-}C^T-\mathbf{I_n}$ that correspond to $m$ malicious sensors and is not full rank, $d$ is an arbitrary nonzero vector of length $m$, and
$\phi_i=\phi_i'$ 
if $i \in I_m$ and ${\phi}_i=0$ for $i \notin I_m$, then FDI attack can bypass the detection. See~\cite{b10} for the implementation of specific sensor attack.

{\it b) Attack Model 3:} In the targeted attack model, 
the attacker aims to change SE by the particular amount $\mathbf{c}(k)$ for chosen state variables $J_{target}=\{i_1, i_2, ..., i_u\},$ where $1\leq u<p$. That is, this attack satisfies the conditions that 

$$\hat{x}_{a,i}(k)=c_i(k)+\hat{x}_i(k) \mbox { for }i \in J_{target}$$
and
$$||y_o(k)-C\hat{x}_{a}(k)|| \leq \tau, $$
where $\hat{x}_{a,i}(k)$ is estimated $i$th state variable obtained by applying the weighted least square approach to malicious sensor measurements $\mathbf{y_o}$ at time $k$, denoted by $y_o(k)$; $c_i(k)$ is the specific error inject to $i$th state variable by an attacker at time $k$, and $\hat{x}_i(k)$ is estimated $i$th state variable before polluted by the attacker. See~\cite{b10} for the implementation of targeted attack.  

\section{The Proposed Methodology for Dynamic Power State Estimation}
\label{method}

In this section, we study the Mahalanobis transformation and its distance, propose robust approaches to detecting false data injection attacks in dynamic power state estimation, and give the time complexity analysis of the proposed approaches. Most importantly, we mathematically prove that the Mahalanobis distance is better than the Euclidean distance in terms of measuring an error in this study.

\subsection{Mahalanobis Transformation and Mahalanobis Distance}\label{mahalanobis}

All the existing studies on power state estimation use the Euclidean distance, which is actually based on the assumption that the measurement noise follows multivariate normal distributions with the variance-covariance matrix as an identity matrix given in subsection \ref{consistency}. However, in reality, the sensor measurements are correlated and the magnitude of the fluctuation among the sensor measurements are different. Therefore, the Euclidean distance cannot be directly applied to these correlated sensor measurements of different magnitudes. Hence, we use the Mahalanobis transformation to eliminate the correlation between the sensor measurements and to standardize the variance of each sensor measurement in order to satisfy the assumption that the measurement errors follow mutually independent standard normal distributions. 

Before defining the Mahalanobis transformation, we review some linear algebra concepts~\cite{everitt2011introduction, raykov2008introduction}. First, the eigenvalues and eigenvectors of a symmetric matrix are defined as follows.

\begin{definition}\label{eigen}
A $p \times p$ matrix $\mathbf{A}$ has eigenvalue $\lambda$ if there exists some p-dimensional vector $\gamma \neq 0$ for which $A\gamma =\lambda \gamma.$ The vector $\gamma$ is called the eigenvector corresponding to $\lambda$.
\end{definition}

The real symmetric matrix has real-valued eigenvalues.  If all eigenvalues of a matrix $\mathbf{A}$ are non-negative, then $\mathbf{A}$ is called positive semidefinite. If all eigenvalues of $\mathbf{A}$ are strictly positive, then it is called positive definite. Any symmetric matrix can be factored using the spectral decomposition (or Jordan decomposition) as shown in Theorem \ref{jordan} \cite{multivariate}.

\begin{theorem}\label{jordan}
Let $\mathbf{A}$ be a $p \times p$ symmetric matrix with the eigenvalues $\lambda_1, \lambda_2, ..., \lambda_p$ and the corresponding eigenvectors $\gamma_1, \gamma_2, ..., \gamma_p$. Then, $\mathbf{A}$ can be written as 
$$\mathbf{A}=U D U^T, $$
where $D =\mbox{diag}\{\lambda_1, \lambda_2, ..., \lambda_p\} $ and $U=(\gamma_1, \gamma_2, ..., \gamma_p).$
\end{theorem}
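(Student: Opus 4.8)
The plan is to establish this classical spectral (Jordan) decomposition by induction on the dimension $p$, the key being to construct an \emph{orthonormal} basis of eigenvectors so that the matrix $U=(\gamma_1,\ldots,\gamma_p)$ is orthogonal and $U^{-1}=U^T$; the claimed factorization $\mathbf{A}=UDU^T$ then follows by reading the eigenvalue equations column by column. I would first record the two facts stated just before the theorem and used implicitly throughout: since $\mathbf{A}$ is real symmetric, every root of its characteristic polynomial $\det(\mathbf{A}-\lambda \mathbf{I}_p)$ is real (so each $\lambda_i\in\mathbf{R}$ and the corresponding eigenvectors may be taken real), and the fundamental theorem of algebra guarantees that at least one such root exists.

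For the base case $p=1$ the statement is immediate. For the inductive step I would pick a real eigenvalue $\lambda_1$ and normalize its eigenvector so that $\gamma_1^T\gamma_1=1$. The crucial move is to pass to the orthogonal complement $W=\{v\in\mathbf{R}^p : v^T\gamma_1=0\}$, which has dimension $p-1$, and to show that $W$ is invariant under $\mathbf{A}$. This uses symmetry directly: for any $v\in W$ one has $(\mathbf{A}v)^T\gamma_1=v^T\mathbf{A}^T\gamma_1=v^T\mathbf{A}\gamma_1=\lambda_1\, v^T\gamma_1=0$, so $\mathbf{A}v\in W$. Choosing any orthonormal basis of $W$, the restriction of $\mathbf{A}$ to $W$ is represented by a $(p-1)\times(p-1)$ symmetric matrix, so the induction hypothesis applies and yields orthonormal eigenvectors $\gamma_2,\ldots,\gamma_p$ of $\mathbf{A}$ (lying in $W$) with real eigenvalues $\lambda_2,\ldots,\lambda_p$.

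It then remains to assemble the pieces. The vectors $\gamma_1,\ldots,\gamma_p$ are orthonormal, since $\gamma_1$ is a unit vector and each $\gamma_i$ with $i\ge 2$ lies in $W=\gamma_1^{\perp}$ while being mutually orthonormal inside $W$; hence $U=(\gamma_1,\ldots,\gamma_p)$ satisfies $U^TU=\mathbf{I}_p$, i.e.\ $U$ is orthogonal with $U^{-1}=U^T$. Stacking the $p$ identities $\mathbf{A}\gamma_i=\lambda_i\gamma_i$ as the columns of a single matrix equation gives $\mathbf{A}U=UD$ with $D=\mbox{diag}\{\lambda_1,\ldots,\lambda_p\}$, and right-multiplying by $U^T=U^{-1}$ produces $\mathbf{A}=UDU^T$, as claimed.

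I expect the main obstacle to be precisely the treatment of repeated eigenvalues: one cannot obtain an orthonormal eigenbasis merely by invoking that eigenvectors belonging to distinct eigenvalues are orthogonal, because a repeated $\lambda_i$ forces an arbitrary choice of orthonormal basis within its eigenspace. The induction on the $\mathbf{A}$-invariant orthogonal complement $W$ is what circumvents this difficulty cleanly, so the care in the argument goes into verifying that $W$ is genuinely $\mathbf{A}$-invariant and that the restricted operator is again symmetric, which is exactly where the hypothesis $\mathbf{A}^T=\mathbf{A}$ is indispensable.
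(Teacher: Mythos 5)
Your proposal is correct, but there is nothing in the paper to compare it against: the paper does not prove Theorem~1 at all, it simply states the spectral (Jordan) decomposition and cites it to a standard multivariate statistics reference. Your induction argument is the classical proof of the real spectral theorem, and it is sound: the existence of a real eigenvalue, the $\mathbf{A}$-invariance of $W=\gamma_1^{\perp}$ via symmetry, the symmetry of the restricted operator (if $Q$ has orthonormal columns spanning $W$, the restriction is $Q^T\mathbf{A}Q$, and eigenvectors $u_i$ of $Q^T\mathbf{A}Q$ lift to eigenvectors $Q u_i$ of $\mathbf{A}$ precisely because $\mathbf{A}W\subseteq W$), and the final assembly $\mathbf{A}U=UD \Rightarrow \mathbf{A}=UDU^T$ are all in order. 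One point worth making explicit, which your proof quietly repairs: as literally stated, the theorem is false for an arbitrary choice of eigenvectors $\gamma_1,\ldots,\gamma_p$ (they need not even be unit vectors, and within a repeated eigenspace they need not be orthogonal), so $U^T$ need not equal $U^{-1}$. The correct reading, and the one your construction delivers, is that the eigenvectors \emph{can be chosen} orthonormal, making $U$ orthogonal; your closing remark about repeated eigenvalues is exactly the right diagnosis of why a naive ``distinct eigenvalues give orthogonal eigenvectors'' argument would not suffice. In short, you have supplied a complete, self-contained proof where the paper defers entirely to the literature.
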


Using the spectral decomposition, we can define the power of a matrix as follows \cite{multivariate}. 

\begin{definition}
Let $\mathbf{A}$ be a $p \times p$ symmetric matrix such that $$\mathbf{A}=U D U^T, $$
where $D =\mbox{diag}\{\lambda_1, \lambda_2, ..., \lambda_p\} $ and $U=(\gamma_1, \gamma_2, ..., \gamma_p).$ Then for $a \in R$, we have $$\mathbf{A}^a=U D^a U^T, $$
where $D^a =\mbox{diag}\{\lambda_1^a, \lambda_2^a, ..., \lambda_p^a\}. $ 

In particular, if $a=-1$, then the inverse of $\mathbf{A}$ is $\mathbf{A}^{-1}=U D^{-1} U^T$. If $a=\frac{1}{2}$, then the square root of $\mathbf{A}$ is $\mathbf{A}^{\frac{1}{2}}=U D^{\frac{1}{2}} U^T$.
\end{definition}

\subsubsection{Mahalanobis Transformation}

In this subsection, we introduce the Mahalanobis transformation and show that this transformation standardizes a set of correlated variates $y$. The Mahalanobis transformation is defined as follows. 

\theoremstyle{definition}
\begin{definition}
Given an n-variate random vector $y$ with mean $E(y)=\mu $ and variance $Var(y)=\Sigma $.  Suppose that $\Sigma$ is positive definite. Then, the linear transformation
$$z=\Sigma^{-\frac{1}{2}}(y-\mu)$$
is called the Mahalanobis transformation. 
\end{definition}

In the following theorem, we show that the Mahalanobis transformation transforms $y$ to a standardized uncorrelated variate $z$. 

\begin{theorem}
Given n-variate random vector $y$ with mean $E(y)=\mu $ and variance $Var(y)=\Sigma $.  Suppose that $\Sigma$ is positive definite. Let $$z=\Sigma^{-\frac{1}{2}}(y-\mu).$$ Then, $E(z)=0$ and $Var(z)=I_p$.
\end{theorem}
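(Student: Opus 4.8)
The plan is to verify the two claims directly from the definitions of $E(\cdot)$ and $Var(\cdot)$, treating $\Sigma^{-\frac{1}{2}}$ as a fixed (non-random) matrix and exploiting its symmetry, which follows from the spectral decomposition of Theorem \ref{jordan} together with the definition of a matrix power given just above.

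First I would handle the mean. Because $\Sigma^{-\frac{1}{2}}$ is a constant matrix and expectation is linear, I would write $E(z)=E(\Sigma^{-\frac{1}{2}}(y-\mu))=\Sigma^{-\frac{1}{2}}(E(y)-\mu)=\Sigma^{-\frac{1}{2}}(\mu-\mu)=0$. This step is routine and uses only linearity of $E(\cdot)$.

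Next, for the covariance I would invoke the standard transformation rule $Var(Ay)=A\,Var(y)\,A^{T}$ for a constant matrix $A$, applied with $A=\Sigma^{-\frac{1}{2}}$. This gives $Var(z)=\Sigma^{-\frac{1}{2}}\,\Sigma\,(\Sigma^{-\frac{1}{2}})^{T}$, and the remaining task is to simplify this product to the identity.

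The main obstacle, which is really the only substantive step, is establishing that $\Sigma^{-\frac{1}{2}}$ is symmetric and that $\Sigma^{-\frac{1}{2}}\Sigma\Sigma^{-\frac{1}{2}}=I_{n}$; both follow from the spectral decomposition. Writing $\Sigma=UDU^{T}$ with $U$ orthogonal (so $UU^{T}=U^{T}U=I_{n}$) and $D=\mathrm{diag}\{\lambda_{1},\ldots,\lambda_{n}\}$ having all $\lambda_{i}>0$ by positive definiteness, the definition of a matrix power gives $\Sigma^{-\frac{1}{2}}=UD^{-\frac{1}{2}}U^{T}$, which is manifestly symmetric, so $(\Sigma^{-\frac{1}{2}})^{T}=\Sigma^{-\frac{1}{2}}$. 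Substituting and cancelling via orthogonality, $\Sigma^{-\frac{1}{2}}\Sigma\Sigma^{-\frac{1}{2}}=UD^{-\frac{1}{2}}U^{T}UDU^{T}UD^{-\frac{1}{2}}U^{T}=U(D^{-\frac{1}{2}}DD^{-\frac{1}{2}})U^{T}=UI_{n}U^{T}=I_{n}$, where $D^{-\frac{1}{2}}DD^{-\frac{1}{2}}=\mathrm{diag}\{\lambda_{i}^{-\frac{1}{2}}\lambda_{i}\lambda_{i}^{-\frac{1}{2}}\}=I_{n}$. The positive definiteness hypothesis is used precisely here, since it is what guarantees each $\lambda_{i}^{-\frac{1}{2}}$ is well defined; this completes $Var(z)=I_{n}$.
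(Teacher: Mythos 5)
Your proposal is correct and follows essentially the same route as the paper's proof: linearity of expectation for $E(z)=0$, and the transformation rule $Var(Ay)=A\,Var(y)\,A^{T}$ followed by the simplification $\Sigma^{-\frac{1}{2}}\Sigma\Sigma^{-\frac{1}{2}}=I$. The only difference is that the paper asserts this last identity (and the symmetry of $\Sigma^{-\frac{1}{2}}$) without comment, whereas you justify both via the spectral decomposition --- a worthwhile elaboration, but the same argument.
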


\begin{proof}
\begin{align*}
E(z) &=E(\Sigma^{-\frac{1}{2}}(y-\mu)) \\
  &=\Sigma^{-\frac{1}{2}}E(y-\mu) \\
  &=\Sigma^{-\frac{1}{2}}(E(y)-E(\mu))\\
  & =\Sigma^{-\frac{1}{2}}(\mu-\mu)=0 
\end{align*}

and 
\begin{align*}
Var(z)&=Var(\Sigma^{-\frac{1}{2}}(y-\mu))\\
&=Var(\Sigma^{-\frac{1}{2}}y)\\
&=\Sigma^{-\frac{1}{2}}\Sigma\Sigma^{-\frac{1}{2}}=I &&\qedhere
\end{align*}

\end{proof}

Based on the central limit theorem, the noise follows a normal distribution under normal operations. Furthermore, after applying the Mahalanobis transformation, the noise has a zero mean and an identity covariance matrix based on Theorem 2. However, we do not know the true mean $\mu$ and variance $\Sigma$. Therefore, we use the sample mean $\bar{x}$ to estimate $\mu$ and the sample variance matrix 
$$S=\frac{1}{n-1}X^T(I_n-\frac{1}{n-1}1_n1_n^T)X$$ 
to estimate $\Sigma$. 

\subsubsection{Mahalanobis Distance}

After applying the Mahalanobis transformation, the variables are uncorrelated and standardized (theorem 3), and the familiar Euclidean distance can be used to define $\tau$-consistency. Alternatively, if we do not use the Mahalanobis transformation to decorrelate and standardize the variates, we can modify the definition of $\tau$-consistency use the idea of Mahalanobis distance. Usually, the Euclidean distance is used for measurements with the assumption that each element of $\mathbf{y}$ contributes equally to the calculation of the Euclidean distance. However, in reality, the magnitude of random fluctuations is different for each element of $\mathbf{y}$ and there are correlations between $\mathbf{y's}$. Hence, it is desirable to weigh each variable subject to the variance-covariance matrix. The Mahalanobis distance is defined as follows. 
\theoremstyle{definition}

\begin{definition}
Given a p-variate random vector $x$ with mean $E(x)=\mu $ and variance $Var(x)=\Sigma $. Suppose that $\Sigma$ is positive definite. 
Then the Mahalanobis distance of an observation $\textit{\textbf{y}}=(y_1, y_2, ..., y_n)$ from $\mu$ is 
$$D=\sqrt[]{(x-\mu)^T\Sigma^{-1}(x-\mu)}.$$
\end{definition}

The next theorem shows that the Euclidean distance is a special case of the Mahalanobis distance when the variance-covariance matrix is an identity matrix. That is, only when there is neither correlation between the variables nor the variation in fluctuations among the variables, the Mahalanobis distance is equivalent to the Euclidean distance; they are not equal to each other otherwise. In other words, the Mahalanobis distance provides a better measure of error since it does not assume that the variables are not correlated. 

\begin{theorem}
If $\Sigma=I$, then Mahalanobis distance is equivalent to the Euclidean distance.
\end{theorem}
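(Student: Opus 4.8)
The plan is to substitute the hypothesis $\Sigma=I$ directly into the definition of the Mahalanobis distance and simplify the resulting quadratic form until it coincides with the $L_2$-norm introduced in the notation subsection. First I would observe that the inverse of the identity matrix is itself, since $I\cdot I = I$ forces $I^{-1}=I$. Substituting $\Sigma^{-1}=I$ into $D=\sqrt{(x-\mu)^T\Sigma^{-1}(x-\mu)}$ then collapses the weighting matrix and leaves $D=\sqrt{(x-\mu)^T(x-\mu)}$.

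Next I would expand the remaining inner product componentwise. Writing $x-\mu=(x_1-\mu_1,\dots,x_n-\mu_n)^T$, the quadratic form becomes $(x-\mu)^T(x-\mu)=\sum_i (x_i-\mu_i)^2$, so that $D=\sqrt{\sum_i (x_i-\mu_i)^2}$. Comparing this expression with the definition $\|\mathbf{x}\|=\sqrt{\sum_{i=1}^p |x_i|^2}$ recorded in the notation subsection, I would identify $D$ as exactly $\|x-\mu\|$, the Euclidean distance between $x$ and $\mu$, which establishes the claim.

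There is no genuinely hard step here, since the result is settled by a one-line substitution; accordingly the only thing to be careful about is bookkeeping. Specifically, I would state explicitly that $I^{-1}=I$ rather than leaving it implicit, and I would make clear that the final expression is \emph{literally} the $L_2$-norm defined earlier rather than merely resembling it. For completeness, and to connect the statement back to the surrounding discussion, I might also remark that this computation is precisely what justifies the earlier observation that the Euclidean-distance-based $\tau$-consistency check silently assumes a covariance structure of the form $\Sigma=I$; however, the theorem as stated requires only the forward implication, which the substitution above fully resolves.
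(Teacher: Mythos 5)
Your proof is correct and follows exactly the route the paper takes: the paper's entire argument is the remark that ``a direct substitution of $\Sigma=I$ to the definition of Mahalanobis distance can prove the previous theorem,'' which is precisely your substitution $\Sigma^{-1}=I$ followed by the componentwise identification with the $L_2$-norm. Your write-up is simply a more explicit version of the same one-line computation.
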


A direct substitution of $\Sigma=I$ to the definition of Mahalanobis distance can prove the previous theorem. Using the idea of the Mahalanobis distance, we can redefine $\tau$ -consistency as follows.

\begin{definition}
Sensor measurements $Y=\{y_{j_1}, ..., y_{j_k}\}$ are considered $\tau$-consistent if 
$$\min_x\sqrt[]{(\mathbf{H}\mathbf{x}-Y)^T\Sigma^{-1} (\mathbf{H}\mathbf{x}-Y)}<\tau,$$ 
where $\mathbf{H}=C(j_1,...,j_k)$ is a matrix consisting of the $j_1$-th, $j_2$-th,..., and $j_k$-th rows of matrix $\mathbf{C}$.
\end{definition}

For the rest of the paper, we assume that either the Mahalanobis transformation has been applied to $y$ or the Mahalanobis distance is used for $\tau$-consistency checks. 

\subsection{Robust Approaches for Dynamic State Estimation}\label{SE approaches}

In this subsection, we present the proposed robust algorithms against FDI attacks for DSE. In subsection \ref{consistency}, the measurement consistency is defined based on the $L^2$-norm  $||\mathbf{C}(j_1,...,j_d)\mathbf{x}(k)-(y_{j_1},...,y_{j_d})^T||$ where $\textbf{C}(j_1,...,j_d)$ is a matrix consisting of the $j_1$-th, $j_2$-th,..., and $j_d$-th rows of matrix $\mathbf{C}$. Liu et al.~\cite{b10} showed the vulnerability of this type of consistency check and presented a new class of attacks that can bypass this consistency check. In this paper, we develop two alternative versions of this $\tau$ consistency check. These alternative versions do not rely on solving $\textbf{Cx}=\textbf{y}$ to find the MMSE estimate for $\textbf{x}$, so Liu et al.'s method~\cite{b10} for generating attack vector based on the column space of $\textbf{C}$ does not work in our proposed two versions of this $\tau$ consistency check. 

\subsubsection{Prediction based Consistency Approach (PCNA)}\label{method 1}

We estimate $\hat{x}_k$ by $\textbf{CA}\hat{x}_{k-1}$ and define the consistency check using the $L^2$-norm: 
\begin{equation}
    ||\mathbf{(CA)}(j_1,...,j_d)\mathbf{\hat{x}}(k-1)-(y_{j_1},...,y_{j_d})^T(k)||.
\end{equation}
Formally, it is defined as follows.

\begin{definition}\label{d7}

Assume $d>p.$ Sensor measurements $y_{j_1}, ..., y_{j_d}$ are considered $\tau$-consistent if $$\parallel\mathbf{(CA)}(j_1,...,j_d)\mathbf{\hat{x}}(k-1)-(y_{j_1},...,y_{j_d})^T(k)\parallel <\tau,$$ 
where $||*||$ is a $L_2$-norm and $\mathbf{(CA)}(j_1,...,j_d)$ is a matrix consisting of the $j_1$-th, $j_2$-th,..., and $j_d$-th rows of the product of matrix $\mathbf{C}$ and matrix $\mathbf{A}$. $\tau$ is the threshold that can be determined through a hypothesis test 
$$P(\textit{\textit{L }}<\tau)=\alpha,$$
where 
$$L=\parallel\mathbf{(CA)}(j_1,...,j_d)\mathbf{\hat{x}}(k-1)-(y_{j_1},...,y_{j_d})^T(k)\parallel$$ follows a chi-square distribution with $d-p$ degree of freedom and $\alpha$ is significance level of the test.

\end{definition}

This way of defining $\tau$-consistency is better than Definition \ref{d1} in two aspects. First, attack models proposed by Liu \emph{et al.} \cite{b11} do not work in this case. Secondly, it is computationally efficient by using $\mathbf{Ax}(k-1)$ to estimate $\mathbf{x}(k)$ instead of finding the estimate of $\mathbf{x}(k)$ using the minimum mean square error method (MMSE). 

Starting with all measurements: $$\mathbf{y}=(y_1, y_2, ..., y_n)^T(k),$$ we check if it is $\tau$ consistent. If it is not $\tau$ consistent, we remove such a  sensor, denoted as sensor $l$, that is corresponding to 
the largest residual $$r_l=\max_{1\leq i \leq  n}r_i,$$ 
where $r_i$ is the $i$-th component of $\mathbf{r}=\mathbf{CA}\mathbf{\hat{x}}(k-1)-\mathbf{y}^T(k)$.  Next, update the value of $\tau$ and check the $\tau$-consistency for the remain set of sensor measurements. Process repeat until a $\tau$ consistent set $I_c$ is obtained. 
See Algorithm~\ref{alg 4} for the implementation details.

\begin{algorithm}
\caption{Prediction based Consistency Approach (PCNA) Algorithm}\label{alg 4}
\begin{algorithmic}[1]
\Require{measurements $I$, previous state estimate }
\Ensure{a consistent subset of measurements }
\Statex 
\State{Compute the number $\delta$ of benign sensors required using equation 1}
\While{$|I| \geq \delta$ }   
\State{Compute the $L^2$-norm using measurement set $I$}
    \If {$L^2$-norm $<\tau$}
        \State{break}
    \Else
        \State {remove a measurement with largest residual}
        \State {from $I$ }
    \EndIf
\EndWhile
\end{algorithmic}
\end{algorithm}

\subsubsection{Combined Consistency and Kalman filter (CCKF) Approach} \label{method 2}

PCNA is time-efficient; however, it does not incorporate the current sensor measurements in SE. To added current sensor measurements information to SE, we need to define our $\tau$ consistency check differently. Before defining it, we introduce the Kalman filter estimation. The Kalman filter estimation is an optimal SE that combines the information from the system model, previous SE, and the sensors measurements. It is light on memory since the only previous state is needed to calculate the current state. These are the main reasons that the Kalman filter is selected for real-time DSE. The Kalman filter technique consists of two steps: prediction and update.

With the knowledge of the power system model, we calculate the predicted state estimate 
\begin{equation}\label{eqn:3}
    \hat{\mathbf{x}}(k|k-1)=\mathbf{A\hat{x}}(k-1|k-1)
\end{equation}
and its covariance matrix
\begin{equation}\label{eqn:4}
    \mathbf{P}(k|k-1)=A\mathbf{P}(k-1|k-1)A^{T}+\mathbf{\sigma_w^2I_p}.
\end{equation}

Use the current measurements, we can update the state estimate
\begin{equation}\label{eqn:5}
    \hat{\mathbf{x}}(k|k)=\hat{\mathbf{x}}(k|k-1)+\mathbf{K}(k)(\mathbf{y}(k)-\mathbf{C}\hat{\mathbf{x}}(k|k-1))
\end{equation}
and its updated estimate covariance
\begin{equation}\label{eqn:6}
\mathbf{P}(k|k)=(\mathbf{I}_p-\mathbf{K}(k)\mathbf{C})\mathbf{P}(k|k-1)
\end{equation}
where the Kalman gain
\begin{equation}\label{eqn:7}
\mathbf{K}(k)=\mathbf{P}(k|k-1)\mathbf{C}^{T}(\mathbf{CP}(k|k-1)\mathbf{C}^{T}+\mathbf{\sigma_v^2I_n}).
\end{equation}

The Kalman filter is an optimal SE that uses the information from both the system model and sensors measurements. However, if a subset of the sensor measurements is corrupt, it affects the SE. Hence, the consistency check is first applied to find a large subset of consistent sensor measurements, and then the Kalman filter is utilized to that subset for optimizing SE. 

To make the Kalman filter work under the FDI attack, we modify the Kalman filter approach by changing the Kalman gain $K$ to $K' \sim \mbox{Uniform} (K-r, K+r)$. where $r>0$ is some small number related to $K$ and $K-r >0$. This introduces some random variation to the Kalman gain to further reduce the attacker's ability to generating an effective FDI attack. Now we define the $\tau$-consistency based on this modified Kalman filter estimation as follows. 

\begin{definition}\label{d8}
Assume $d>p.$ Sensor measurements $y_{j_1}, ..., y_{j_d}$ are considered $\tau$-consistent if $$||\mathbf{C}(j_1,...,j_d)\mathbf{\hat{x}_{F}}-(y_{j_1},...,y_{j_d})^T|| <\tau,$$ 
where $||*||$ is a $L_2$-norm, $\mathbf{\hat{x}_{F}}$ is the modified Kalman filter estimate, and $\textbf{C}(j_1,...,j_d)$ is a matrix consisting of the $j_1$-th, $j_2$-th,..., and $j_d$-th rows of matrix $\mathbf{C}$. $\tau$ is the threshold that can be determined through a hypothesis test 
$$P(\textit{\textit{L }}<\tau)=\alpha,$$
where $L=||\mathbf{C}(j_1,...,j_d)\mathbf{\hat{x}_{F}}-(y_{j_1},...,y_{j_d})^T||$ follows a chi-square distribution with $d-p$ degree of freedom and $\alpha$ is significance level of the test.
\end{definition}

\begin{algorithm}[ht]
\caption{Probabilistic Rank-Based Expanding}\label{algorithm 1}
\begin{algorithmic}[1]
\Require{measurements, previous state vector, desired probability, and the number $n_{best}$ of best set want to obtain}
\Ensure{a consistent set of measurements}
\Statex 
\State{\textbf{Seeding Phase}}
\State{Compute the number $\delta$ of benign sensors required using equation 1 } 
\State {Compute the number $h$ of subsets needed to examine using equation 3}

\For{$j\gets 1$ to $h$}         
 	\State {Randomly select a subset $I_j$ of measurements}
    \State {Compute the state estimation based on $I_j$} 
    \State {Compute and sort the residuals}
    \State {Sum up the first $\delta$ smallest residual squares }
\EndFor
\State{Select $n_{best}$ subsets with smallest residual squares }
\State{\textbf{Expanding Phase}}
\For{each subset $S$ obtained from seeding phase}
	\For {each sensor measurement $y'$ not in $S$ }
	    \State{Check whether adding $y'$ to $S$ improves fitness}
	    \If {the fitness improves}
         \State{Add $y'$ to $S$}  
    \EndIf
	\EndFor
    \EndFor
\State{Keep the subset with the smallest residual sum of square}
\end{algorithmic}
\end{algorithm}

Before presenting our proposed robust approach for DSE in dynamic power systems, we first extend the probabilistic rank-based expanding approach to defend against FDI in dynamic power systems. Algorithm~\ref{algorithm 1} gives the pseudocode of this approach. Then, using Algorithm~\ref{algorithm 1}, Definition~\ref{d8}, and the Kalman filter described previously, we obtained the Combined Consistency with the Kalman filter (CCKF) Algorithm as follows.

\begin{algorithm}[ht]\label{algorithm 3}
\caption{Combined Consistency and Kalman filter (CCKF) Algorithm}
\begin{algorithmic}[1]
\Require{measurements, previous state vector estimation }
\Ensure{current state vector estimation }
\Statex 
\State{Find a large set of consistent measurements $y_c$ using either Algorithm \ref{alg 4} or  \ref{algorithm 1}}
\State {Compute the predicted state estimate by using equation~\ref{eqn:3}   }
\State {Compute the predicted covariance matrix by using equation~\ref{eqn:4}  }
\State {Compute the Kalman gain by using equation~\ref{eqn:7} }
\State {Compute the updated state estimate by using equation~\ref{eqn:5} and  $y_c$ obtained in step 1}
\State {Compute the updated covariance matrix by using equation~\ref{eqn:6}  }
\end{algorithmic}
\end{algorithm}

\subsection{Complexity Analysis}\label{compare}

Three consistency checks are introduced to find the consistent set of sensor measurements, assuming that the number of malicious sensors is less than half of the total number of sensors. In this section, we analyze the computational time complexity of them. 

There are several methods for implementing these consistency checks. 
The brute force approach introduced in section \ref{brute} is too time-consuming. If we do not know the actual number of the attacked sensors, in the worst case scenario, there are $$\left( \begin {array}{c} n\\ n \end{array} \right)+...+\left( \begin {array}{c} n\\ \lceil \frac{n}{2} \rceil\end{array} \right) = 
  \left \{ \begin {array} {ll}
\frac{ 2^n- \left ( \begin{array}{c} n\\ \frac{n}{2} \end{array} \right)} {2} & \mbox{if n is even} \\  \\
2^{n-1} & \mbox{if n is odd} \end{array} \right. $$
MMSE operations. Hence, the time complexity is $\mathcal{O}(2^{n})$ in the worst case scenario. 

In contrast, the probabilistic rank-based expanding approach can find a large subset of measurements for SE while being efficient. The seeding phase primarily consists of finding MMSE and calculating the residuals. As stated in section IV, we only need to examine 
\begin{equation}
    h=\left\lceil\frac{ \log{(1-P_h)} } { \log{ \left(1- \frac{ \binom{\delta}{p} }{\binom {n}{p}}  \right)}} \right\rceil
\end{equation}
subsets of $p$ measurements to find a seed for expanding phase. Hence, the time complexity for solving the linear equation is $\mathcal{O}(hp^{2.376})<\mathcal{O}(\binom{n}{k} p^{2.376})$ whereas the time complexity for calculate the measurement residual in the seeding phase is $\mathcal{O}(nph)<\mathcal{O}(\binom{n}{k} np)$. Therefore, the time complexity of seeding phase is $\mathcal{O}(\binom{n}{k} (np+p^{2.376}))$. 
For the expanding phase, we only need to perform $n-p$ Kalman filter estimation operations. Thus, the most time-consuming step is from the seeding phase with time complexity $\mathcal{O}(\binom{n}{k} (np+p^{2.376}))$, as the time complexity of the Kalman filter estimation is the same as the time complexity for solving a linear system of equations. 

However, there is no need to do the rank-based expanding approach if Definition 7 is used. In this case, the dominant operations consist of matrix multiplications and the calculation of the measurement residual. The best known algorithm for multiplying an $n \times p$ and a $p \times p$ matrix runs as $\mathcal{O}(np^{2})$, whereas the time complexity for calculating the measurement residual is $\mathcal{O}(np)$. Therefore, the time complexity is $\mathcal{O}(np^{2})$, and the consistency check based on Definition 7 is the most time-efficient. 
\subsection{Discussion}
In this subsection, we would like to discuss the two potential extension directions of our proposed approaches as follows.

As shown above, we have developed robust approaches against false data injection attacks, where it is assumed that power sensor data are collected and sent to a control center for dynamic power state estimation. In a large-scale power grid system, however, it is not only very expensive but also infeasible to send all the sensor data collected in different locations to a single central location such as the control center. This is because real-time monitoring a power system must result in large sensor data and such a big data transfer over networks is very challenging in terms of performance and security. Thus, it is better to achieve power state estimation in a distributed computing fashion. That is, power sensor data will be processed in a local control center that is close to where those data are collected. Such a distributed computing framework can dramatically reduce the sizes of data that need to be transferred and the difficulty of the transfer of those data, resulting in the performance improvement of data transfer. Of course, such a framework may arise other research challenges that are usually seen in a distributed computing. However, a lot of studies have been conducted to addressed such challenges. For example, 
we can propose to implement a blockchain-based orchestrator to automate the management, coordination, and organization of the proposed approaches in the local control centers. The orchestration supports the efficient delivery of distributed computing resources for the local state estimation under false data injection attacks, whereas blockchain-based implementation eliminates trust issues when local control centers may belong to different entities. More precisely, we may follow the framework proposed in~\cite{qu2020decentralized} to implement a blockchain-based orchestrator. In this case, each local state estimation update is uploaded to its associated miner. This local state estimation is verified and shared among all other miners, and then the global state estimation is achieved through the collaborative computation among these miners.

Furthermore, distributed computing encounters a variety of security attacks. For example, a new type of data poisoning attacks has been recently developed to attack federated learning systems, a new type of distributed computing systems~\cite{zhang2019poisoning}. The main idea of this attack is to use a generative adversarial network (GAN) to artificially generate data similar to real data and then to update the weights of each local neural network model based on these artificial data instead of real data. This attack is stealthy because the generated data are apparently indistinguishable from real data. Moreover, in order to increase the influence of the updates of local model weights on the global neural network model, the attacker scales up their local model weight updates by some large values, i.e., 20–100 as proposed in~\cite{zhang2019poisoning}. Though this scale-up increases the attack success rate, the updated weights provided by an attacker is very different from the weight updates of local neural networks obtained through normal operations without any attacks. In our further study, we may apply our proposed consistency check to local model weights for defending against such data poisoning attacks. More precisely, we may follow the consistency check framework proposed in~\cite{lin2021active}.

\section{Evaluation}\label{evaluation}
We empirically evaluate the proposed approaches through experiments using an IEEE 14 bus system in this section. More specifically, we give our evaluation setup, present our parameter selection of three attack models, and conduct the performance comparison of the proposed algorithms with existing ones in terms of accuracy, runtime, and the effect of injection error on estimation.

\subsection{Evaluation Setup}

The IEEE 14 bus power system is used to compare three definitions of consistency and to evaluate all four approaches: the least square approach, the Imhotep-SMT approach, the CCKF approach, and the MEE-UKF approach~\cite{10}. Since the Imhotep-SMT approach is evaluated using the 14 bus system in~\cite{smt}, we compare the Imhotep-SMT approach with the CCKF approach using the same bus system. This 14 bus power network consists of five synchronous generators and 14 buses \cite{2}. To ensure that SE is feasible, sensor number 35 is assumed to be safe as well. That is, the attacker does not have access to sensor number 35. The matrices $A, B,$ and $C$ modeling the power network are derived in~\cite{2}. The sensor readings at time $k$ is generated by calculating the sensor measurements using $\mathbf{Cx}(k)+\mathbf{v}(k)+\mathbf{\phi}(k)$, where $\mathbf{v}(k)$ follows the normal distribution with mean 0 and variance $10^{-1}I_n$, and $\mathbf{\phi}(k)$ is the attack vector given as before. 

\subsection{Parameter Selection of Three Attack Models}

We have three attack models. The simplest attack model is a random attack model (model 1). For this attack model, the attacker randomly selects $m=14$ sensors and introduces random errors to these sensor measurements. The next two attack models (consistency attack models) are more intricate. The consistency attack models are based on the idea introduced in subsection \ref{consistency}. They are more complex than model 1 and focus on introducing the attack that bypasses the consistency check. Attack models 2 and 3 are developed based on Liu et al.'s work \cite{b10}. For attack model 2, we assume the attacker has access to 14 meters; i.e., meters numbering 1-5, 13, 15-20, 28, and 33 are malicious. The attack vector for Attack Model 2 is ${\phi}(k)=(I_{35}-B'^{-}B')d$, where the matrix $B'$ is the column vectors of $C(C^TC)^{-}C^T$ that are corresponding to meters 1-5, 13, 15-20, 28, and 33 and $d=50*1_{14}$ (In general, $d$ is an arbitrary nonzero vector). Attack Model 3 is called the targeted attack Model. The targeted attack model introduces specific errors to chosen target state variables using Algorithm 2. 

Before the simulation, we also specified some parameters for these approaches. For the Imhotep-SMT approach, we specified the upper bound for the maximum number of the attacked sensors to be 14. For the CCKF approach, we set $\alpha=0.5\%$ and $P_{h}=0.995$ since we want to have a high confidence, i.e., the 99.5\% confidence, to ensure that we are able to find a large consistent set whose type 1 error is $\alpha=0.5\%$ within first $h$ random subsets tested. We set the process covariance matrix as $10^{-7}I_p$ and the noise covariance matrix as $10^{-1}I_n$ to see the performance of our approach under a relatively large measurement uncertainty comparing to process uncertainty~\cite{8309126}. 

\begin{figure}[ht]
\centering
\includegraphics[width=\columnwidth]{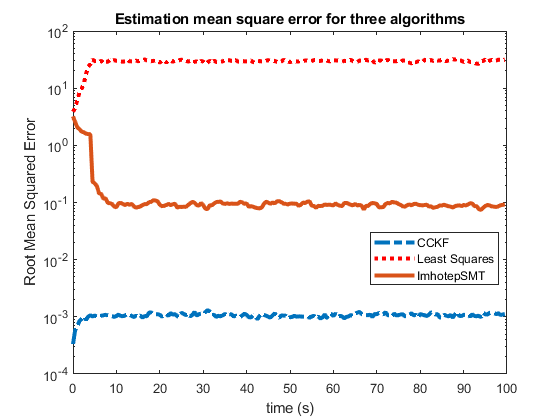}
\caption{Root mean squared error in the power system under attack model 1}
\label{fig 1}
\end{figure}

\begin{figure}[ht]
\centering
\includegraphics[width=80mm]{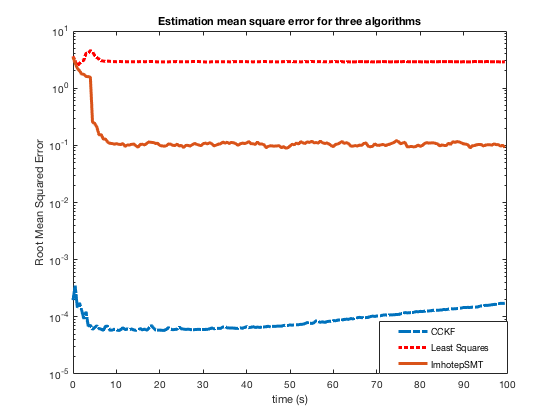}
\caption{Root mean squared error in the power system under attack model 2}
\label{fig 2}
\end{figure}

\begin{figure}[ht]
\centering
\includegraphics[width=80mm]{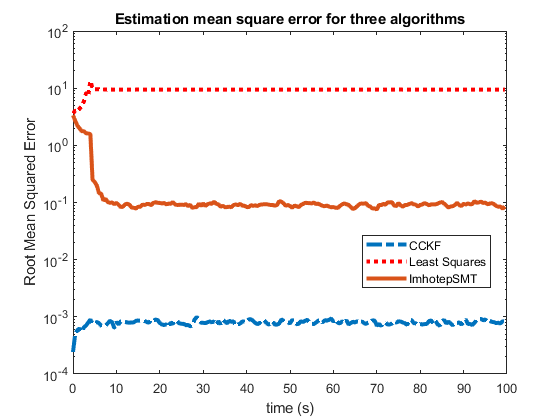}
\caption{Root mean squared error in the power system under attack model 3}
\label{fig 3}
\end{figure}

The simulation is repeated 100 times, and the root-mean-squared error (RMSE) of attack-resilient SE is reported based on the result of these 100 simulations. Figures~\ref{fig 1}, \ref{fig 2}, and \ref{fig 3} show the root mean squared error introduced by attack models 1, 2, and 3, respectively. As shown, the least square method performs poorly under all three attack scenarios. The Imhotep-SMT approach performed better with RMSE about 0.1, and the CCKF approach is best with the smallest RMSE. After approximately 10 seconds of simulation time, the errors are greatly reduced when either the CCKF approach or the Imhotep-SMT approach is used.  

Furthermore, we compare our approach with the MEE-UKF approach~\cite{10}. The comparison setup is the same as~\cite{10}, except we use the root mean square error (RMSE) instead of the mean absolute error (MAE) as a measure of performance. Both approaches utilize the Kalman filter, although our approach also uses the Kalman filter's prediction step for consistency check. As shown in Figure~\ref{fig 8}, our approach has a somewhat lower root mean square error and provides a more stable result since it first filters out the inconsistent sensors before the state estimation. 

\begin{figure}[ht]
\centering
\includegraphics[width=80mm]{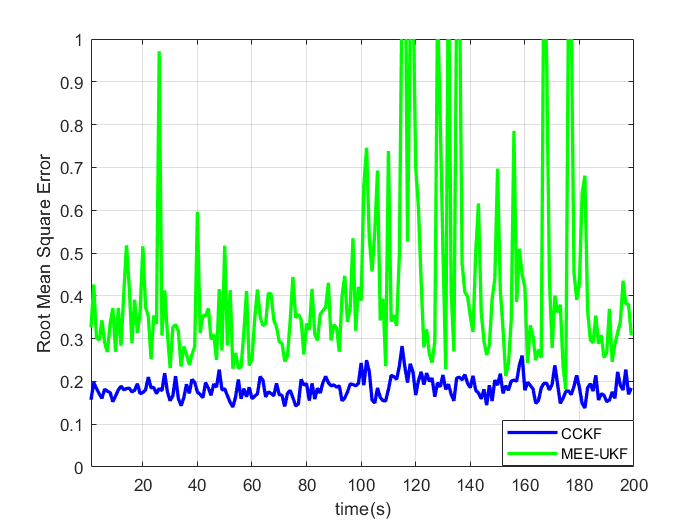}
\caption{Model Comparison: CCKF vs. MEE-UKF}
\label{fig 8}
\end{figure}

\subsection{Performance Comparison}

Since the Kalman filter uses both information from the sensor measurements and the system model, the Kalman filter gives a more precise state estimation than the least square approach. Combining the consistency check with the Kalman filter gives a new approach for SE against FDI attacks. By the observation matrix definition, the measurement matrix $C$ is a submatrix of $O$. Since the Imhotep-SMT approach also considers the consistency check but in terms of the observation matrix, the Imhotep-SMT approach gives a more precise estimation. However, the Imhotep-SMT approach assumes that the bound on the maximum number of malicious sensors and the bound on the sensor noise are known in addition to knowing that the number of malicious sensors is sparse. For the CCKF approach, we assume that the noise is Gaussian and the number of malicious sensors is less than half of the total number of sensors. With the Gaussian noise assumption, if the consistency check can remove all malicious sensors, the Kalman filter estimation is optimal. In this subsection, the performance of two proposed consistency checks based on Definitions 7 and 8 are evaluated and compared with the existing consistency check defined in Definition 1. 
Figures~\ref{fig 4} and~\ref{fig 5} show the RMSE in SE of the power system under the random attack and the consistency attack, respectively. 
In the first experiment (Figure~\ref{fig 4}), we generate a random attack based on Algorithm 1. For this attack model, the attacker randomly selects $m=14$ sensors and introduces random errors to these sensor measurements. As shown in Figure~\ref{fig 4}, the performances of the three consistency checks are about the same. In the second experiment (Figure~\ref{fig 5}), we generate a specific meter attack assuming that the attacker has access to meters 1-5, 13, 15-20, 28, and 33. As shown in Figure~\ref{fig 5}, the $\tau$-consistency check based on Definition 7 performs best among the three.

\begin{figure}[ht]
\centering
\includegraphics[width=80mm]{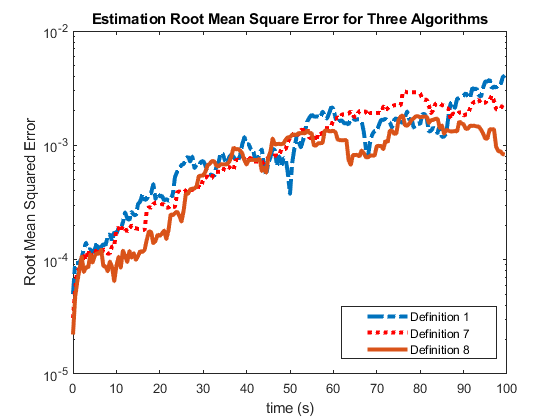}
\caption{Three consistency checks based on Definition 1, 7, and 8 have similar performances under random attack.}
\label{fig 4}
\end{figure}

\begin{figure}[ht]
\centering
\includegraphics[width=80mm]{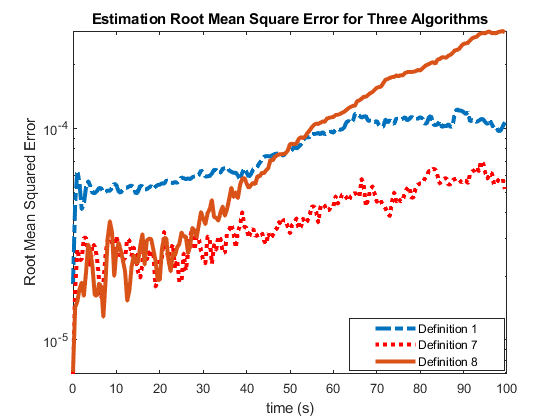}
\caption{The consistency check based on Definition 7 outperforms others under the Consistency attack.}
\label{fig 5}
\end{figure}

\subsection{Runtime Comparison }

Table~\ref{table 1} shows the execution times for the SE of the power system under the random attack. Without loss of generality, we increase the number of state variables and the number of sensors simultaneously. We set $n=3p$ as in \cite{smt} for comparison purposes. As $p$ increases, the average runtime also increases. When $p=10,$ the average runtime for the CCKF approach based on Definition 8 is the smallest. However, the average runtime for it increases faster than that of PCNA. When $p=50,$ the average runtime for the PCNA becomes the smallest. When $p=150,$ the average runtime for the MMSE is triple the average runtime for PCNA, whereas the average runtime for CCKF is more than ten times the average runtime for the CCKF. This shows that PCNA based on Definition 7 is the most time effective and confirms the runtime analysis of section 3. 

\begin{table}[ht]
\caption{Runtime Comparison}
\begin{center}
\begin{tabular}{|l|l|l|l|l|l|l|}
\cline{1-7}
 & \multicolumn{2}{l|}{MMSE} & \multicolumn{2}{l|}{PCNA}  &   \multicolumn{2}{|l|}{CCKF}       \\ \cline{1-7}
p& Mean      & Sd         & Mean & Sd          & Mean & Sd  \\ \cline{1-7}
10 & 0.844   &  0.101    &0.396  &0.029  &0.317 &0.034 \\ \cline{1-7}
25  & 1.176  & 0.148     &0.345  &0.004  &0.467 &0.062 \\ \cline{1-7}
50  & 1.497  & 0.190     &0.710  & 0.031 &3.628  & 0.544\\ \cline{1-7}
75  & 1.997  & 0.227     & 0.833 & 0.024 & 9.69 & 1.337\\ \cline{1-7}
100  & 2.694 &  0.379    & 0.993 & 0.027 & 14.909 & 1.341\\ \cline{1-7}
125  & 4.246 & 0.583     &  1.163& 0.031 & 16.123 &1.841 \\ \cline{1-7}
150  & 5.912 &  0.878    &  1.439& 0.022 & 19.312 & 2.320\\ \cline{1-7}

\end{tabular}
\end{center}
\label{table 1}
\end{table}

\subsection{Effect of injection error on estimation}

To see the effect of the injection error on SE, we double the injection errors and rerun the experiment in subsection C. As shown in Figures~\ref{fig 6} and~\ref{fig 7}, the performance of CCKF (based on Definition 8) and PCNA (based on Definition 7) are similar, whereas their performances are significantly better than the performance of MMSE (based on Definition 1) under both consistency attack and random attack. 

\begin{figure}[ht]
\centering
\includegraphics[width=80mm]{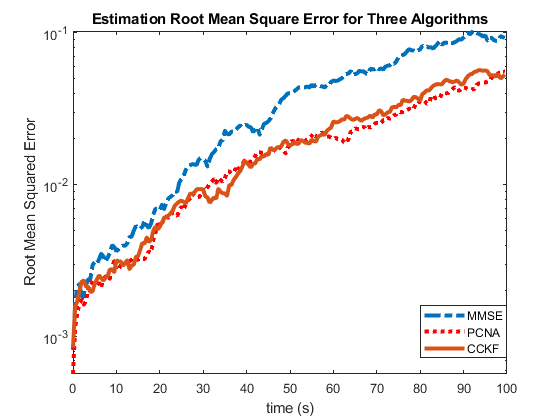}
\caption{Root-Mean-Square Error of SE under a random attack}
\label{fig 6}
\end{figure}

\begin{figure}[ht]
\centering
\includegraphics[width=80mm]{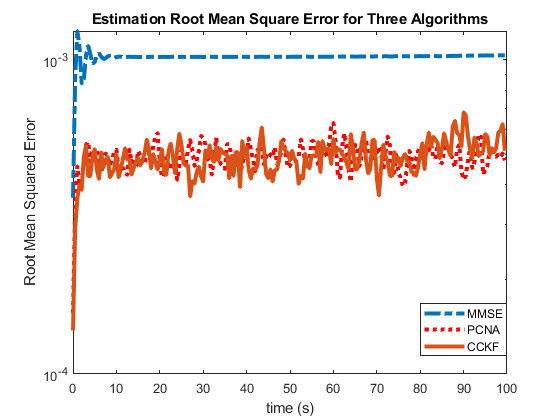}
\caption{Root-Mean-Square Error of SE under a consistency attack}
\label{fig 7}
\end{figure}

\section{Related Work}\label{related}
In this section, we review some related work to our study. Many researchers have considered the problem of detecting and identifying FDI attacks \cite{coutinho2009anomaly}, \cite{fawzi2014secure}, \cite{b1}-\cite{b26}, but they mainly focus on static power state estimation instead of a dynamic one studied in this research. For instance, Pires et al. \cite{b11} and Vedik and Chandel \cite{b13} proposed robust iteratively reweighted least squares (IRLS) approach and least winsorized square (LWS) approach for SE, respectively. The least winsorized square is robust in that it eliminates an outlier before making the SE. Our proposed methodology also performs the SE by first removing the outlier or potential malicious sensors.
Teixeira et al. \cite{11} proposed the false data attack under both linear and nonlinear estimators, assuming that the attacker only possesses a perturbed model. Xiong and Ning \cite{b9} also studied the FDI attacks against power system SSE and counter measurement. Nevertheless, Yao Liu et al. \cite{b10} showed the vulnerability of this SSE; i.e., \newHL{an} FDI attack can be designed that introduces arbitrary errors that bypass detection even when the attacker has limited resources. 

Different from the SSE, the DSE is more robust in the sense that it includes the information from the previous state and system model \cite{13}. A number of techniques for DSE in power systems have been developed \cite{10}, \cite{b17}, \cite{9}, \cite{2020}, \cite{valverde2011unscented}-\cite{wu}, \cite{b14}, \cite{u2}, \cite{b19}. For instance, an extended particle filter (PF) is used to estimating the dynamic states of a synchronous machine \cite{14}, and Zhang et al. \cite{u2} developed an adaptive mixed Kalman/$H_{\infty}$ filtering for the distribution network. Mandal et al. \cite{b17} proposed two algorithms that incorporated the measurement function nonlinearities in the extended Kalman filter (EKF) scheme for DSE. Pang et al. \cite{2020} proposed a Kalman filter based output tracking control system against FDI \newHL{attacks}. Kundu et al. presented an anomaly detector based on an auto-encoder \cite{b37}.

Yang et al. \cite{b14} developed the following five novel attack approaches: maximum magnitude-based attack, wave-based attack, positive deviation attack, negative deviation attack, and mixed attack that can bypass the anomaly detection. These approaches are based on the idea of injecting a small error that is within the specified tolerance. In addition, they developed temporal-based detection algorithms to defend against those attacks. They assumed that the measurement errors before the attack follow a distribution, whereas the measurement errors after the attack follow a different distribution. However, in reality, an attacker could inject the targeted specified error that constantly moves the system through multiple transient phases. Shoukry and Tabuada \cite{b27} designed two computationally efficient algorithms for SE using event-triggered techniques. However, the work does not consider the sensor and process noise and only considered the random attack model. On contrary, this paper considers both sensor and process noise. In addition, our proposed algorithm is evaluated using both the random attack model and consistency attack models described in the previous section. Wu et al.~\cite{wu} proposed a novel sliding-mode observer for SE and an event-triggered scheme for saving limited computational resources. Lyu et al. \cite{lyu} considered a different aspect of the SE, the transmission reliability for SE. In \cite{smt}, Tabuada introduced the Imhotep-SMT approach for secure SE. The technique consists of two main steps: detection and estimation. The idea behind the detection of the malicious sensors is similar to the consistency check introduced in the previous section. However, instead of using the norm $||\mathbf{y}(k)-\mathbf{C}\mathbf{x}(k)||$, it uses $||\mathbf{Y}(k)-\mathbf{O}\mathbf{x}(k)||$ where 

\begin{equation*}
\mathbf{Y}(k)=
\begin{bmatrix}
    \mathbf{y}(k-\eta+1)\\
    \mathbf{y}(k-\eta) \\
    \vdots \\
    \mathbf{y}(k)
\end{bmatrix}
\end{equation*}
is the measurement matrix, and $\eta \leq p$ is an integer selected that \newHL{guarantees} the system observability matrix  
\begin{equation*}
\mathbf{O}=
\begin{bmatrix}
    \mathbf{C}\\
    \mathbf{CA}\\
    \vdots \\
    \mathbf{CA}^{\eta-1}
\end{bmatrix}
\end{equation*} has full rank. Let  $I$ be the subset of sensors that passes detection. Then, the estimated state $\mathbf{x}(k)$ is the solution to the equation $\mathbf{Y}_I(k)=\mathbf{O}_I\mathbf{x}(k)$. This approach has the same limitation as the general MMSE as shown in Theorem~\ref{theorem 4}. 

\begin{theorem}\label{theorem 4}
Let $\mathbf{y}(k)$ be the original sensor measurement vector at time $k$ that can pass the bad measurement detection of the Imhotep-SMT approach. Then, the received malicious measurement vector $$\mathbf{y}_a(k)=\mathbf{y}(k)+\mathbf{Ce}$$ at time $k$  can pass the bad measurement detection if  $$\mathbf{e}=\mathbf{O}^{-1}(\mathbf{1}_{\eta}	\otimes\mathbf{\phi}),$$ 
where $\mathbf{\phi}$ is the nonzero attack vector injected by an attacker. 
\end{theorem}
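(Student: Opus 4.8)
The plan is to lift the single-snapshot column-space argument of subsection~\ref{consistency} from the measurement matrix $\mathbf{C}$ to the stacked window $\mathbf{Y}(k)$ and the observability matrix $\mathbf{O}$. Recall that the Imhotep-SMT detector accepts a window when the least-squares residual $\min_{\mathbf{x}}\|\mathbf{Y}(k)-\mathbf{O}\mathbf{x}\|$ stays below $\tau$, and let $\hat{\mathbf{x}}$ be the minimizer for the clean window, so that $\|\mathbf{Y}(k)-\mathbf{O}\hat{\mathbf{x}}\|<\tau$ by hypothesis. The central claim to establish is that any perturbation living in the column space of $\mathbf{O}$ shifts this minimizer but leaves the residual, and hence the detection outcome, unchanged --- exactly as the attack $\mathbf{\phi}(k)=\mathbf{C}\mathbf{e}(k)$ does for Definition~\ref{d1}.

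First I would write the attacked window as $\mathbf{Y}_a(k)=\mathbf{Y}(k)+\mathbf{O}\mathbf{e}$ and record that, because $\mathbf{O}$ has full column rank (which is how $\eta$ is selected), its generalized inverse acts as a left inverse, $\mathbf{O}^{-}\mathbf{O}=\mathbf{I}_p$. Computing the perturbed least-squares estimate then gives $\hat{\mathbf{x}}_a=\mathbf{O}^{-}\mathbf{Y}_a(k)=\hat{\mathbf{x}}+\mathbf{O}^{-}\mathbf{O}\mathbf{e}=\hat{\mathbf{x}}+\mathbf{e}$, the direct analogue of the displacement $\mathbf{e}(k)=\hat{\mathbf{x}}_o(k)-\hat{\mathbf{x}}(k)$ used earlier. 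Substituting this shift into the residual produces the cancellation
$$\|\mathbf{Y}_a(k)-\mathbf{O}\hat{\mathbf{x}}_a\|=\|\mathbf{Y}(k)+\mathbf{O}\mathbf{e}-\mathbf{O}(\hat{\mathbf{x}}+\mathbf{e})\|=\|\mathbf{Y}(k)-\mathbf{O}\hat{\mathbf{x}}\|<\tau,$$
so the attacked window passes precisely the same test as the clean one. Finally I would specialize $\mathbf{e}=\mathbf{O}^{-}(\mathbf{1}_{\eta}\otimes\mathbf{\phi})$ and read off, block by block, that the induced window attack $\mathbf{O}\mathbf{e}$ is the best in-subspace realization of the constant injection that adds the attacker's vector $\mathbf{\phi}$ to each of the $\eta$ snapshots, whose base block $\mathbf{C}\mathbf{e}$ is the summand appearing in $\mathbf{y}_a(k)=\mathbf{y}(k)+\mathbf{C}\mathbf{e}$.

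The step I expect to be the main obstacle is the generalized-inverse bookkeeping, where it is tempting but wrong to demand $\mathbf{O}\mathbf{O}^{-}=\mathbf{I}$. The cancellation above must be arranged so that it relies only on the left-inverse identity $\mathbf{O}^{-}\mathbf{O}=\mathbf{I}_p$ on the estimate side; the projection $\mathbf{O}\mathbf{O}^{-}$ is not the identity when $\mathbf{O}$ is strictly tall, but it never needs to be, because the injected perturbation is taken to be $\mathbf{O}\mathbf{e}$ and therefore already lies in the column space of $\mathbf{O}$ by construction. The accompanying care is to match the paper's block ordering of $\mathbf{Y}(k)$ against the stacking $\mathbf{C},\mathbf{CA},\dots,\mathbf{CA}^{\eta-1}$ so that the Kronecker factor $\mathbf{1}_{\eta}\otimes\mathbf{\phi}$ is correctly identified with repeating $\mathbf{\phi}$ across the window, and to confirm that the measurement and process noise enter identically on both sides of the residual and so play no role in the cancellation.
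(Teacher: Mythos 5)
Your proof is correct, and it follows the same shift-and-cancel skeleton as the paper's own proof: perturb the stacked window, show the least-squares estimate shifts by $\mathbf{e}$, and let the in-column-space perturbation cancel in the residual so the detection outcome is unchanged. The genuine difference is in how the attacked window is modeled, and on this point your version is actually tighter than the paper's. The paper writes the attacked window as $\mathbf{Y}_a(k)=\mathbf{Y}(k)+(\mathbf{1}_{\eta}\otimes\mathbf{\phi})$ and then cancels $\mathbf{O}\mathbf{e}=\mathbf{O}\mathbf{O}^{-1}(\mathbf{1}_{\eta}\otimes\mathbf{\phi})$ against $\mathbf{1}_{\eta}\otimes\mathbf{\phi}$; since $\mathbf{O}^{-1}$ is only a left inverse of a strictly tall matrix, $\mathbf{O}\mathbf{O}^{-1}$ is a projection rather than the identity, so that cancellation is valid only under the unstated hypothesis $\mathbf{1}_{\eta}\otimes\mathbf{\phi}\in\mathrm{col}(\mathbf{O})$. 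You instead take the window perturbation to be $\mathbf{O}\mathbf{e}$ by construction, so it lies in $\mathrm{col}(\mathbf{O})$ automatically and the argument needs only $\mathbf{O}^{-}\mathbf{O}=\mathbf{I}_p$ --- precisely the pitfall you flagged, and the one the paper's own write-up glosses over (under the extra hypothesis the two formulations coincide, because then $\mathbf{O}\mathbf{e}=\mathbf{1}_{\eta}\otimes\mathbf{\phi}$ exactly). The small price of your formulation is fidelity to the theorem's literal wording: with the paper's stacking (earliest snapshot in the top block), the component of $\mathbf{O}\mathbf{e}$ added to $\mathbf{y}(k)$ is $\mathbf{C}\mathbf{A}^{\eta-1}\mathbf{e}$, while $\mathbf{C}\mathbf{e}$ is the perturbation of the first snapshot in the window, so the identification with ``$\mathbf{y}_a(k)=\mathbf{y}(k)+\mathbf{C}\mathbf{e}$'' requires either reindexing the window or reading the theorem's $k$ as the start of the window; this bookkeeping looseness is present in the paper as well and does not affect the validity of the cancellation.
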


\begin{proof}
Since $\mathbf{O}$ has full rank, then $\mathbf{O}$ has a left inverse $\mathbf{O}^{-1}$ such that $$\mathbf{O}^{-1}\mathbf{O} = I_p.$$ Therefore, $$\mathbf{e}=\mathbf{O}^{-1}(\mathbf{1}_{\eta}	\otimes\mathbf{\phi})$$ exists.
Let $\hat{\mathbf{x}}(k)$ be the estimated state vector obtained from the $\mathbf{y}(k)$, and $\hat{\mathbf{x}}_a(k)$ be the estimated state vector obtained from the $\mathbf{y}_a(k)$. Then, $\mathbf{Y}(k)$ and $\mathbf{Y}_a(k)$ are a corresponding matrix generated from $\mathbf{y}(k)$ and $\mathbf{y}_a(k)$, respectively. And we have that

$$|| \mathbf{Y}_a(k)-\mathbf{O}\hat{x}_{a}(k)|| =||\mathbf{Y}(k)+ (\mathbf{1}_{\eta} \otimes \mathbf{\phi} )-\mathbf{O}(\mathbf{x}(k)+\mathbf{e})|| $$
  \begin{align*}
&=||\mathbf{Y}(k)+ (\mathbf{1}_{\eta} \otimes \mathbf{\phi} )-\mathbf{O}(\mathbf{x}(k)+\mathbf{O}^{-1}(\mathbf{1}_{\eta}	\otimes\mathbf{\phi}))|| \\
&=||\mathbf{Y}(k)+ (\mathbf{1}_{\eta} \otimes \mathbf{\phi} )-\mathbf{O}\mathbf{x}(k)-(\mathbf{1}_{\eta} \otimes \mathbf{\phi} )|| \\
&=|| \mathbf{Y}(k)-\mathbf{Ox}(k)||
\end{align*}  

\noindent Therefore, if $|| \mathbf{Y}(k)-\mathbf{Ox}(k)||<\tau$, $|| \mathbf{Y}_a(k)-\mathbf{O}\hat{x}_{a}(k)||<\tau$. 
\end{proof}
This theorem shows that the constructed attack vector $\mathbf{\phi}$ can easily bypass the detection. On the contrary, our proposed consistency checks overcome this limitation. 

\section{Conclusion and Future Work}
\label{conclusion}

A power grid is a typical energy-based Cyber Physical System (CPS) that is vital to our daily life, but it, at the same time, is susceptible to various cyber attacks. A successful attack on power systems not only results in a significant economic loss, but it may also cause loss of human life. State estimation is needed for controlling and monitoring the state of power systems based on readings of sensors placed at important power grid components. Existing studies focused on static power state estimation. In this paper, instead, we investigated power state estimation, where PCNA and CCKF approaches were proposed and implemented to estimate the dynamic states of a power system. The two approaches are robust. The experimental studies illustrated that the performances of PCNA and CCKF are similar, even though PCNA is more time-efficient. The performance of the CCKF approach was also compared with the two well-known approaches, the Imhotep-SMT approach, and the least square approach, under three different attack models: a random attack model and two consistency attack models, which gave typical FDI attacks in a power system. Our experimental results demonstrated that the proposed approach outperforms both the Imhotep-SMT approach by two orders of magnitude and the least square approach by four orders of magnitude.  Furthermore, we compared our approach with the MEE-UKF approach to show that our approach provides a more stable result, though both approaches have similar performance. Moreover, all the existing studies used the Euclidean distance in power state estimation. In this research, the little-known but useful Mahalanobis distance was presented and used for $\tau$-consistency calculation. Finally, we investigated the properties of the Mahalanobis transformation and the Mahalanobis distance through theoretical analysis to show that the Mahalanobis distance is a better measure of the error than the Euclidean distance does.

In the future work, we will investigate the performance of the proposed approaches on real-world power grid data. We will also consider extending the proposed approaches by using a blockchain-based orchestrator and studying data poisoning attacks. 

\section*{Acknowledgment}

We acknowledge National Science Foundation to partially sponsor Dr. Kaiqi Xiong\textquotesingle s work under grants CNS 1620862 and 1620871, and BBN/GPO project 1936 through NSF/CNS grant. The views and conclusions contained herein are those of the authors and should not be interpreted as necessarily representing the official policies, either expressed or implied of NSF.

\bibliographystyle{cas-model2-names}
\bibliography{reference.bib}

\begin{thebibliography}{45}
\expandafter\ifx\csname natexlab\endcsname\relax\def\natexlab#1{#1}\fi
\providecommand{\url}[1]{\texttt{#1}}
\providecommand{\href}[2]{#2}
\providecommand{\path}[1]{#1}
\providecommand{\DOIprefix}{doi:}
\providecommand{\ArXivprefix}{arXiv:}
\providecommand{\URLprefix}{URL: }
\providecommand{\Pubmedprefix}{pmid:}
\providecommand{\doi}[1]{\href{http://dx.doi.org/#1}{\path{#1}}}
\providecommand{\Pubmed}[1]{\href{pmid:#1}{\path{#1}}}
\providecommand{\bibinfo}[2]{#2}
\ifx\xfnm\relax \def\xfnm[#1]{\unskip,\space#1}\fi
\bibitem[{Amirat et~al.(2020)Amirat, Oubrahim, Ahmed, Benbouzid and Wang}]{8}
\bibinfo{author}{Amirat, Y.}, \bibinfo{author}{Oubrahim, Z.},
  \bibinfo{author}{Ahmed, H.}, \bibinfo{author}{Benbouzid, M.},
  \bibinfo{author}{Wang, T.}, \bibinfo{year}{2020}.
\newblock \bibinfo{title}{Phasor estimation for grid power monitoring: Least
  square vs. linear kalman filter}.
\newblock \bibinfo{journal}{Energies} \bibinfo{volume}{13},
  \bibinfo{pages}{2456}.
\bibitem[{Basetti and Chandel(2016)}]{b13}
\bibinfo{author}{Basetti, V.}, \bibinfo{author}{Chandel, A.K.},
  \bibinfo{year}{2016}.
\newblock \bibinfo{title}{Power system static state estimation using a least
  winsorized square robust estimator}.
\newblock \bibinfo{journal}{Neurocomputing} \bibinfo{volume}{207},
  \bibinfo{pages}{457--468}.
\bibitem[{CERT(2018)}]{b5}
\bibinfo{author}{CERT, U.}, \bibinfo{year}{2018}.
\newblock \bibinfo{title}{Russian government cyber activity targeting energy
  and other critical infrastructure sectors}.
\newblock \bibinfo{journal}{Us Cert} , \bibinfo{pages}{1--19}.
\bibitem[{Coutinho et~al.(2009)Coutinho, Lambert-Torres, da~Silva, Martins,
  Lazarek and Neto}]{coutinho2009anomaly}
\bibinfo{author}{Coutinho, M.P.}, \bibinfo{author}{Lambert-Torres, G.},
  \bibinfo{author}{da~Silva, L.B.}, \bibinfo{author}{Martins, H.},
  \bibinfo{author}{Lazarek, H.}, \bibinfo{author}{Neto, J.C.},
  \bibinfo{year}{2009}.
\newblock \bibinfo{title}{Anomaly detection in power system control center
  critical infrastructures using rough classification algorithm}, in:
  \bibinfo{booktitle}{2009 3rd IEEE International Conference on Digital
  Ecosystems and Technologies}, \bibinfo{organization}{IEEE}. pp.
  \bibinfo{pages}{733--738}.
\bibitem[{Dang et~al.(2020)Dang, Chen, Wang, Ma and Ren}]{10}
\bibinfo{author}{Dang, L.}, \bibinfo{author}{Chen, B.}, \bibinfo{author}{Wang,
  S.}, \bibinfo{author}{Ma, W.}, \bibinfo{author}{Ren, P.},
  \bibinfo{year}{2020}.
\newblock \bibinfo{title}{Robust power system state estimation with minimum
  error entropy unscented kalman filter}.
\newblock \bibinfo{journal}{IEEE Transactions on Instrumentation and
  Measurement} \bibinfo{volume}{69}, \bibinfo{pages}{8797--8808}.
\bibitem[{{Deka} et~al.(2017){Deka}, {Zare}, {Lokhov}, {Jovanovic} and
  {Chertkov}}]{8309126}
\bibinfo{author}{{Deka}, D.}, \bibinfo{author}{{Zare}, A.},
  \bibinfo{author}{{Lokhov}, A.}, \bibinfo{author}{{Jovanovic}, M.},
  \bibinfo{author}{{Chertkov}, M.}, \bibinfo{year}{2017}.
\newblock \bibinfo{title}{State and noise covariance estimation in power grids
  using limited nodal pmus}, in: \bibinfo{booktitle}{2017 IEEE Global
  Conference on Signal and Information Processing (GlobalSIP)}, pp.
  \bibinfo{pages}{1075--1079}.
\newblock \DOIprefix\doi{10.1109/GlobalSIP.2017.8309126}.
\bibitem[{Ding et~al.(2018)Ding, Han, Xiang, Ge and Zhang}]{ding2018survey}
\bibinfo{author}{Ding, D.}, \bibinfo{author}{Han, Q.L.},
  \bibinfo{author}{Xiang, Y.}, \bibinfo{author}{Ge, X.},
  \bibinfo{author}{Zhang, X.M.}, \bibinfo{year}{2018}.
\newblock \bibinfo{title}{A survey on security control and attack detection for
  industrial cyber-physical systems}.
\newblock \bibinfo{journal}{Neurocomputing} \bibinfo{volume}{275},
  \bibinfo{pages}{1674--1683}.
\bibitem[{Everitt and Hothorn(2011)}]{everitt2011introduction}
\bibinfo{author}{Everitt, B.}, \bibinfo{author}{Hothorn, T.},
  \bibinfo{year}{2011}.
\newblock \bibinfo{title}{An introduction to applied multivariate analysis with
  R}.
\newblock \bibinfo{publisher}{Springer Science \& Business Media}.
\bibitem[{Fawzi et~al.(2014)Fawzi, Tabuada and Diggavi}]{fawzi2014secure}
\bibinfo{author}{Fawzi, H.}, \bibinfo{author}{Tabuada, P.},
  \bibinfo{author}{Diggavi, S.}, \bibinfo{year}{2014}.
\newblock \bibinfo{title}{Secure estimation and control for cyber-physical
  systems under adversarial attacks}.
\newblock \bibinfo{journal}{IEEE Transactions on Automatic control}
  \bibinfo{volume}{59}, \bibinfo{pages}{1454--1467}.
\bibitem[{Ghahremani and Kamwa(2011)}]{13}
\bibinfo{author}{Ghahremani, E.}, \bibinfo{author}{Kamwa, I.},
  \bibinfo{year}{2011}.
\newblock \bibinfo{title}{Dynamic state estimation in power system by applying
  the extended kalman filter with unknown inputs to phasor measurements}.
\newblock \bibinfo{journal}{IEEE Transactions on Power Systems}
  \bibinfo{volume}{26}, \bibinfo{pages}{2556--2566}.
\bibitem[{Glenn et~al.(2016)Glenn, Sterbentz and Wright}]{cyber}
\bibinfo{author}{Glenn, C.}, \bibinfo{author}{Sterbentz, D.},
  \bibinfo{author}{Wright, A.}, \bibinfo{year}{2016}.
\newblock \bibinfo{title}{Cyber threat and vulnerability analysis of the US
  electric sector}.
\newblock \bibinfo{type}{Technical Report}. Idaho National Lab.(INL), Idaho
  Falls, ID (United States).
\bibitem[{G{\"o}l and Abur(2014)}]{b12}
\bibinfo{author}{G{\"o}l, M.}, \bibinfo{author}{Abur, A.},
  \bibinfo{year}{2014}.
\newblock \bibinfo{title}{Lav based robust state estimation for systems
  measured by pmus}.
\newblock \bibinfo{journal}{IEEE Transactions on Smart Grid}
  \bibinfo{volume}{5}, \bibinfo{pages}{1808--1814}.
\bibitem[{Greenberg(2017)}]{1}
\bibinfo{author}{Greenberg, A.}, \bibinfo{year}{2017}.
\newblock \bibinfo{title}{How an entire nation became russia's test lab for
  cyberwar}.
\newblock
  \bibinfo{howpublished}{\url{https://www.wired.com/story/russian-hackers-attack-ukraine}}.
\newblock \bibinfo{note}{Online; accessed 29 January 2018}.
\bibitem[{Greenberg(2018)}]{101}
\bibinfo{author}{Greenberg, A.}, \bibinfo{year}{2018}.
\newblock \bibinfo{title}{The code that crashed the world}.
\newblock \bibinfo{howpublished}{\url{https://www.wired.com}}.
\newblock \bibinfo{note}{Online; accessed 09 January 2019}.
\bibitem[{Huang et~al.(2007)Huang, Schneider and
  Nieplocha}]{huang2007feasibility}
\bibinfo{author}{Huang, Z.}, \bibinfo{author}{Schneider, K.},
  \bibinfo{author}{Nieplocha, J.}, \bibinfo{year}{2007}.
\newblock \bibinfo{title}{Feasibility studies of applying kalman filter
  techniques to power system dynamic state estimation}, in:
  \bibinfo{booktitle}{2007 International Power Engineering Conference (IPEC
  2007)}, \bibinfo{organization}{IEEE}. pp. \bibinfo{pages}{376--382}.
\bibitem[{Klett(1972)}]{multivariate}
\bibinfo{author}{Klett, J.}, \bibinfo{year}{1972}.
\newblock \bibinfo{title}{Applied multivariate analysis}.
\newblock \bibinfo{journal}{New York, McGraw} .
\bibitem[{Kundu et~al.(2020)Kundu, Sahu, Serpedin and Davis}]{b37}
\bibinfo{author}{Kundu, A.}, \bibinfo{author}{Sahu, A.},
  \bibinfo{author}{Serpedin, E.}, \bibinfo{author}{Davis, K.},
  \bibinfo{year}{2020}.
\newblock \bibinfo{title}{A3d: Attention-based auto-encoder anomaly detector
  for false data injection attacks}.
\newblock \bibinfo{journal}{Electric Power Systems Research}
  \bibinfo{volume}{189}, \bibinfo{pages}{106795}.
\bibitem[{Langner(2011)}]{Stuxnet}
\bibinfo{author}{Langner, R.}, \bibinfo{year}{2011}.
\newblock \bibinfo{title}{Stuxnet: Dissecting a cyberwarfare weapon}.
\newblock \bibinfo{journal}{IEEE Security \& Privacy} \bibinfo{volume}{9},
  \bibinfo{pages}{49--51}.
\bibitem[{Lin et~al.(2021)Lin, Luley and Xiong}]{lin2021active}
\bibinfo{author}{Lin, J.}, \bibinfo{author}{Luley, R.}, \bibinfo{author}{Xiong,
  K.}, \bibinfo{year}{2021}.
\newblock \bibinfo{title}{Active learning under malicious mislabeling and
  poisoning attacks}.
\newblock \bibinfo{journal}{arXiv preprint arXiv:2101.00157} .
\bibitem[{Lin and Pan(2007)}]{lin}
\bibinfo{author}{Lin, J.M.}, \bibinfo{author}{Pan, H.Y.}, \bibinfo{year}{2007}.
\newblock \bibinfo{title}{A static state estimation approach including bad data
  detection and identification in power systems}, in: \bibinfo{booktitle}{2007
  IEEE Power Engineering Society General Meeting},
  \bibinfo{organization}{IEEE}. pp. \bibinfo{pages}{1--7}.
\bibitem[{Liu et~al.(2011)Liu, Ning and Reiter}]{b10}
\bibinfo{author}{Liu, Y.}, \bibinfo{author}{Ning, P.}, \bibinfo{author}{Reiter,
  M.K.}, \bibinfo{year}{2011}.
\newblock \bibinfo{title}{False data injection attacks against state estimation
  in electric power grids}.
\newblock \bibinfo{journal}{ACM Transactions on Information and System Security
  (TISSEC)} \bibinfo{volume}{14}, \bibinfo{pages}{1--33}.
\bibitem[{Lyu et~al.(2016)Lyu, Chen, Yan, Lin, Hua and Guan}]{lyu}
\bibinfo{author}{Lyu, L.}, \bibinfo{author}{Chen, C.}, \bibinfo{author}{Yan,
  J.}, \bibinfo{author}{Lin, F.}, \bibinfo{author}{Hua, C.},
  \bibinfo{author}{Guan, X.}, \bibinfo{year}{2016}.
\newblock \bibinfo{title}{State estimation oriented wireless transmission for
  ubiquitous monitoring in industrial cyber-physical systems}.
\newblock \bibinfo{journal}{IEEE Transactions on Emerging Topics in Computing}
  \bibinfo{volume}{7}, \bibinfo{pages}{187--201}.
\bibitem[{Mandal et~al.(1995)Mandal, Sinha and Roy}]{b17}
\bibinfo{author}{Mandal, J.}, \bibinfo{author}{Sinha, A.},
  \bibinfo{author}{Roy, L.}, \bibinfo{year}{1995}.
\newblock \bibinfo{title}{Incorporating nonlinearities of measurement function
  in power system dynamic state estimation}.
\newblock \bibinfo{journal}{IEE Proceedings-Generation, Transmission and
  Distribution} \bibinfo{volume}{142}, \bibinfo{pages}{289--296}.
\bibitem[{Meriem et~al.(2016)Meriem, Bouchra, Abdelaziz, Jamal, Nazha
  et~al.}]{3}
\bibinfo{author}{Meriem, M.}, \bibinfo{author}{Bouchra, C.},
  \bibinfo{author}{Abdelaziz, B.}, \bibinfo{author}{Jamal, S.O.B.},
  \bibinfo{author}{Nazha, C.}, et~al., \bibinfo{year}{2016}.
\newblock \bibinfo{title}{Study of state estimation using
  weighted-least-squares method (wls)}, in: \bibinfo{booktitle}{2016
  International Conference on Electrical Sciences and Technologies in Maghreb
  (CISTEM)}, \bibinfo{organization}{IEEE}. pp. \bibinfo{pages}{1--5}.
\bibitem[{Muscas et~al.(2020)Muscas, Pegoraro, Sulis, Pau, Ponci and Monti}]{9}
\bibinfo{author}{Muscas, C.}, \bibinfo{author}{Pegoraro, P.A.},
  \bibinfo{author}{Sulis, S.}, \bibinfo{author}{Pau, M.},
  \bibinfo{author}{Ponci, F.}, \bibinfo{author}{Monti, A.},
  \bibinfo{year}{2020}.
\newblock \bibinfo{title}{New kalman filter approach exploiting frequency
  knowledge for accurate pmu-based power system state estimation}.
\newblock \bibinfo{journal}{IEEE Transactions on Instrumentation and
  Measurement} \bibinfo{volume}{69}, \bibinfo{pages}{6713--6722}.
\bibitem[{Pang et~al.(2021)Pang, Fan, Sun, Liu and Liu}]{2020}
\bibinfo{author}{Pang, Z.H.}, \bibinfo{author}{Fan, L.Z.},
  \bibinfo{author}{Sun, J.}, \bibinfo{author}{Liu, K.}, \bibinfo{author}{Liu,
  G.P.}, \bibinfo{year}{2021}.
\newblock \bibinfo{title}{Detection of stealthy false data injection attacks
  against networked control systems via active data modification}.
\newblock \bibinfo{journal}{Information Sciences} \bibinfo{volume}{546},
  \bibinfo{pages}{192--205}.
\bibitem[{Pasqualetti et~al.(2013)Pasqualetti, D{\"o}rfler and Bullo}]{2}
\bibinfo{author}{Pasqualetti, F.}, \bibinfo{author}{D{\"o}rfler, F.},
  \bibinfo{author}{Bullo, F.}, \bibinfo{year}{2013}.
\newblock \bibinfo{title}{Attack detection and identification in cyber-physical
  systems}.
\newblock \bibinfo{journal}{IEEE transactions on automatic control}
  \bibinfo{volume}{58}, \bibinfo{pages}{2715--2729}.
\bibitem[{Pires et~al.(1998)Pires, Costa and Mili}]{b11}
\bibinfo{author}{Pires, R.}, \bibinfo{author}{Costa, A.},
  \bibinfo{author}{Mili, L.}, \bibinfo{year}{1998}.
\newblock \bibinfo{title}{Iteratively reweighted least-squares state estimation
  through givens rotations}.
\newblock \bibinfo{journal}{IEEE Power Engineering Review}
  \bibinfo{volume}{18}, \bibinfo{pages}{54--54}.
\bibitem[{Qu et~al.(2020)Qu, Gao, Luan, Xiang, Yu, Li and
  Zheng}]{qu2020decentralized}
\bibinfo{author}{Qu, Y.}, \bibinfo{author}{Gao, L.}, \bibinfo{author}{Luan,
  T.H.}, \bibinfo{author}{Xiang, Y.}, \bibinfo{author}{Yu, S.},
  \bibinfo{author}{Li, B.}, \bibinfo{author}{Zheng, G.}, \bibinfo{year}{2020}.
\newblock \bibinfo{title}{Decentralized privacy using blockchain-enabled
  federated learning in fog computing}.
\newblock \bibinfo{journal}{IEEE Internet of Things Journal}
  \bibinfo{volume}{7}, \bibinfo{pages}{5171--5183}.
\bibitem[{Raykov and Marcoulides(2008)}]{raykov2008introduction}
\bibinfo{author}{Raykov, T.}, \bibinfo{author}{Marcoulides, G.A.},
  \bibinfo{year}{2008}.
\newblock \bibinfo{title}{An introduction to applied multivariate analysis}.
\newblock \bibinfo{publisher}{Routledge}.
\bibitem[{Ruvic(2017)}]{u3}
\bibinfo{author}{Ruvic, D.}, \bibinfo{year}{2017}.
\newblock \bibinfo{title}{'russia has been testing cyberwarfare techniques on
  ukraine since 2014. what is next?}
\newblock
  \bibinfo{howpublished}{\url{https://www.pri.org/stories/2017-06-23/russia-
  has - been-testing-cyberwarfare-techniques-ukraine-2014-whats-next}}.
\newblock \bibinfo{note}{[Online; accessed 19-July-2017]}.
\bibitem[{Shoukry et~al.(2017)Shoukry, Nuzzo, Puggelli,
  Sangiovanni-Vincentelli, Seshia and Tabuada}]{smt}
\bibinfo{author}{Shoukry, Y.}, \bibinfo{author}{Nuzzo, P.},
  \bibinfo{author}{Puggelli, A.}, \bibinfo{author}{Sangiovanni-Vincentelli,
  A.L.}, \bibinfo{author}{Seshia, S.A.}, \bibinfo{author}{Tabuada, P.},
  \bibinfo{year}{2017}.
\newblock \bibinfo{title}{Secure state estimation for cyber-physical systems
  under sensor attacks: A satisfiability modulo theory approach}.
\newblock \bibinfo{journal}{IEEE Transactions on Automatic Control}
  \bibinfo{volume}{62}, \bibinfo{pages}{4917--4932}.
\bibitem[{Shoukry and Tabuada(2015)}]{b27}
\bibinfo{author}{Shoukry, Y.}, \bibinfo{author}{Tabuada, P.},
  \bibinfo{year}{2015}.
\newblock \bibinfo{title}{Event-triggered state observers for sparse sensor
  noise/attacks}.
\newblock \bibinfo{journal}{IEEE Transactions on Automatic Control}
  \bibinfo{volume}{61}, \bibinfo{pages}{2079--2091}.
\bibitem[{Stevens(2020)}]{bs}
\bibinfo{author}{Stevens, C.}, \bibinfo{year}{2020}.
\newblock \bibinfo{title}{Assembling cybersecurity: The politics and
  materiality of technical malware reports and the case of stuxnet}.
\newblock \bibinfo{journal}{Contemporary Security Policy} \bibinfo{volume}{41},
  \bibinfo{pages}{129--152}.
\bibitem[{Teixeira et~al.(2010)Teixeira, Amin, Sandberg, Johansson and
  Sastry}]{11}
\bibinfo{author}{Teixeira, A.}, \bibinfo{author}{Amin, S.},
  \bibinfo{author}{Sandberg, H.}, \bibinfo{author}{Johansson, K.H.},
  \bibinfo{author}{Sastry, S.S.}, \bibinfo{year}{2010}.
\newblock \bibinfo{title}{Cyber security analysis of state estimators in
  electric power systems}, in: \bibinfo{booktitle}{49th IEEE conference on
  decision and control (CDC)}, \bibinfo{organization}{IEEE}. pp.
  \bibinfo{pages}{5991--5998}.
\bibitem[{Valverde and Terzija(2011)}]{valverde2011unscented}
\bibinfo{author}{Valverde, G.}, \bibinfo{author}{Terzija, V.},
  \bibinfo{year}{2011}.
\newblock \bibinfo{title}{Unscented kalman filter for power system dynamic
  state estimation}.
\newblock \bibinfo{journal}{IET generation, transmission \& distribution}
  \bibinfo{volume}{5}, \bibinfo{pages}{29--37}.
\bibitem[{Vukovi{\'c} and D{\'a}n(2013)}]{b1}
\bibinfo{author}{Vukovi{\'c}, O.}, \bibinfo{author}{D{\'a}n, G.},
  \bibinfo{year}{2013}.
\newblock \bibinfo{title}{On the security of distributed power system state
  estimation under targeted attacks}, in: \bibinfo{booktitle}{Proceedings of
  the 28th Annual ACM Symposium on Applied Computing}, pp.
  \bibinfo{pages}{666--672}.
\bibitem[{Wang and Lu(2013)}]{b26}
\bibinfo{author}{Wang, W.}, \bibinfo{author}{Lu, Z.}, \bibinfo{year}{2013}.
\newblock \bibinfo{title}{Cyber security in the smart grid: Survey and
  challenges}.
\newblock \bibinfo{journal}{Computer networks} \bibinfo{volume}{57},
  \bibinfo{pages}{1344--1371}.
\bibitem[{Wu et~al.(2018)Wu, Hu, Liu and Wu}]{wu}
\bibinfo{author}{Wu, C.}, \bibinfo{author}{Hu, Z.}, \bibinfo{author}{Liu, J.},
  \bibinfo{author}{Wu, L.}, \bibinfo{year}{2018}.
\newblock \bibinfo{title}{Secure estimation for cyber-physical systems via
  sliding mode}.
\newblock \bibinfo{journal}{IEEE transactions on cybernetics}
  \bibinfo{volume}{48}, \bibinfo{pages}{3420--3431}.
\bibitem[{Xiong and Ning(2015)}]{b9}
\bibinfo{author}{Xiong, K.}, \bibinfo{author}{Ning, P.}, \bibinfo{year}{2015}.
\newblock \bibinfo{title}{Cost-efficient and attack-resilient approaches for
  state estimation in power grids}, in: \bibinfo{booktitle}{Proceedings of the
  30th Annual ACM Symposium on Applied Computing}, pp.
  \bibinfo{pages}{2192--2197}.
\bibitem[{Yang et~al.(2016)Yang, Chang and Yu}]{b14}
\bibinfo{author}{Yang, Q.}, \bibinfo{author}{Chang, L.}, \bibinfo{author}{Yu,
  W.}, \bibinfo{year}{2016}.
\newblock \bibinfo{title}{On false data injection attacks against kalman
  filtering in power system dynamic state estimation}.
\newblock \bibinfo{journal}{Security and Communication Networks}
  \bibinfo{volume}{9}, \bibinfo{pages}{833--849}.
\bibitem[{Zhang et~al.(2019)Zhang, Chen, Wu, Chen and Yu}]{zhang2019poisoning}
\bibinfo{author}{Zhang, J.}, \bibinfo{author}{Chen, J.}, \bibinfo{author}{Wu,
  D.}, \bibinfo{author}{Chen, B.}, \bibinfo{author}{Yu, S.},
  \bibinfo{year}{2019}.
\newblock \bibinfo{title}{Poisoning attack in federated learning using
  generative adversarial nets}, in: \bibinfo{booktitle}{The 18th IEEE
  International Conference On Trust, Security And Privacy In Computing And
  Communications/13th IEEE International Conference On Big Data Science And
  Engineering (TrustCom/BigDataSE)}, \bibinfo{organization}{IEEE}. pp.
  \bibinfo{pages}{374--380}.
\bibitem[{Zhang and Kong(2019)}]{u2}
\bibinfo{author}{Zhang, X.}, \bibinfo{author}{Kong, X.}, \bibinfo{year}{2019}.
\newblock \bibinfo{title}{Adaptive dynamic state estimation method for
  distribution networks with enhanced robustness}, in: \bibinfo{booktitle}{2019
  22nd International Conference on Electrical Machines and Systems (ICEMS)},
  \bibinfo{organization}{IEEE}. pp. \bibinfo{pages}{1--5}.
\bibitem[{Zhou et~al.(2014)Zhou, Meng, Huang and Welch}]{b19}
\bibinfo{author}{Zhou, N.}, \bibinfo{author}{Meng, D.}, \bibinfo{author}{Huang,
  Z.}, \bibinfo{author}{Welch, G.}, \bibinfo{year}{2014}.
\newblock \bibinfo{title}{Dynamic state estimation of a synchronous machine
  using pmu data: A comparative study}.
\newblock \bibinfo{journal}{IEEE Transactions on Smart Grid}
  \bibinfo{volume}{6}, \bibinfo{pages}{450--460}.
\bibitem[{Zhou et~al.(2013)Zhou, Meng and Lu}]{14}
\bibinfo{author}{Zhou, N.}, \bibinfo{author}{Meng, D.}, \bibinfo{author}{Lu,
  S.}, \bibinfo{year}{2013}.
\newblock \bibinfo{title}{Estimation of the dynamic states of synchronous
  machines using an extended particle filter}.
\newblock \bibinfo{journal}{IEEE Transactions on Power Systems}
  \bibinfo{volume}{28}, \bibinfo{pages}{4152--4161}.

\end{thebibliography}
\vskip6pt

\end{document}